\documentclass[11pt,reqno]{article}

\usepackage[leqno]{amsmath}
\usepackage{amsthm}
\usepackage{amssymb}
\usepackage{times}
\usepackage{verbatim}
\usepackage{enumerate}
\usepackage{geometry}
\usepackage[pdftex]{graphicx}
\usepackage{xspace}
\allowdisplaybreaks[1] \setlength{\parskip}{.5ex}

\newcommand\myeq{\mathrel{\overset{\makebox[0pt]{\mbox{\normalfont\tiny\sffamily def}}}{=}}}

\newtheorem{theorem}{Theorem}[section]
\newtheorem{conjecture}{Conjecture}[section]
\newtheorem{lemma}[theorem]{Lemma}
\newtheorem{corollary}[theorem]{Corollary}

\newtheorem{definition}{{\sc Definition}\rm}[section]

\newcommand{\SLIDING}{\textsc{Sliding-Window}\xspace}

\newcommand{\E}{\mathbb{E}}

\newcommand{\ignore}[1]{\relax}

\begin{document}

\title {Sliding Windows with Limited Storage}
\author{Paul Beame\\
  {\small Computer Science and Engineering}\\
  {\small University of Washington}\\
  {\small Seattle, WA 98195-2350}\\
  {\small\tt beame@cs.washington.edu}
  \and
  Rapha\"el Clifford\\
  {\small Department of Computer Science}\\
  {\small University of Bristol}\\
  {\small Bristol BS8 1UB, United Kingdom}\\
  {\small\tt clifford@cs.bris.ac.uk }\\
  \and
  Widad Machmouchi\\
  {\small Computer Science and Engineering}\\
  {\small University of Washington}\\
  {\small Seattle, WA 98195-2350}\\
  {\small\tt widad@cs.washington.edu}
  }

\date{April 2, 2013
  \vspace*{-3ex}}

\maketitle
\thispagestyle{empty}
\begin{abstract}
\noindent The results of this paper are superceded by the paper at: \verb|http://arxiv.org/abs/1309.3690|.
\vfill

We consider time-space tradeoffs for exactly computing frequency
moments and order statistics over sliding windows~\cite{dgim:windows}.
Given an input of length $2n-1$, the task is to output the function of
each window of length $n$, giving $n$ outputs in total.
Computations over sliding windows are related to direct sum problems
except that inputs to instances almost completely overlap.
\begin{itemize}
\item We show an average case and randomized time-space tradeoff lower bound of
$T\cdot S \in \Omega(n^2)$ for multi-way branching programs, and
hence standard RAM and word-RAM models, to compute the number
of distinct elements, $F_0$, in sliding windows over alphabet $[n]$.
The same lower bound holds for computing the low-order bit of $F_0$ and
computing any frequency moment $F_k$ for $k\ne 1$.
We complement this lower bound with a $T\cdot S \in \tilde O(n^2)$
deterministic RAM algorithm for exactly computing $F_k$ in sliding windows.
\item We show time-space separations between the complexity of sliding-window element distinctness and that of sliding-window $F_0\bmod 2$
computation.
In particular for alphabet $[n]$ there is a very simple errorless
sliding-window algorithm for element distinctness that runs in $O(n)$ time on
average and uses $O(\log{n})$ space.
\item We show that any algorithm for a single element distinctness instance
can be extended to an algorithm for the sliding-window version of element
distinctness with at most a polylogarithmic increase in the time-space product.
\item Finally, we show that the sliding-window computation of
order statistics such as the maximum and minimum can be computed with only a
logarithmic increase in time, but that a $T\cdot S \in \Omega(n^2)$ lower
bound holds for sliding-window computation of order statistics such as the
median, a nearly linear increase in time when space is small.
\end{itemize}
\end{abstract}

\newpage
\setcounter{page}{1}
\section{Introduction}

Direct sum questions in a computational model ask how the complexity of
computing many instances of a function $f$ on independent inputs increases
as the number of instances grows.  The ideal direct sum theorem shows that
computing $n$ independent instances of $f$ requires an $\Omega(n)$ factor
increase in computing resources over computing a single instance of $f$.

Valuable though direct sum theorems can be, they require an increase
in the number of inputs equal to the number of instances.
We are interested in how the complexity of computing many copies of a
function $f$ can grow when the inputs overlap so that the total size of the
input is not much larger than the input size for a single
function\footnote{Computing many copies of a function on overlapping inputs
(selected via a combinatorial design) was used as the basis for the
Nisan-Wigderson pseudorandom generator construction~\cite{nw:pseudorandom},
though in that case the total
input size is much larger than that of the original function.}.

A particularly natural circumstance in which one would want to evaluate many
instances of a function on overlapping inputs occurs in the context of time
series analysis.
For many functions computed over sequences of data elements or data
updates, it is useful to know the value of the function on many
different intervals or {\em windows} within the sequence, each representing
the recent history of the data at a given instant.
In the case that an answer
for every new element of the sequence is required, such computations have
been termed {\em sliding-window} computations for the associated
functions~\cite{dgim:windows}.

We focus on the questions of when the sliding-window versions of problems
increase their complexity, and under what circumstances one can prove
significantly larger lower bounds for these sliding-window versions than
can be shown for the original functions.
Unlike an ordinary direct sum lower bound, a positive answer will yield a
proportionately better lower bound relative to the input size.
The complexity measure we use is the time required for a given amount of
storage; i.e., we study time-space tradeoffs of these sliding-window problems.
Given the general difficulty of proving lower bounds for single output
functions, in addition to the goal of obtaining proportionately larger lower
bounds, comparing the difficulty of computing sliding-window versions
of functions $f$ and $g$ may be easier than comparing them directly.

Many natural functions have previously been studied for sliding windows
including entropy, finding frequent symbols, frequency moments and order
statistics, which can be computed approximately in small
space using randomization even in one-pass data stream
algorithms~\cite{dgim:windows,BabcockBDMW02,am04,lt06, lt06b, ccm07,boz12}.
Approximation is required since exactly computing these values in this online
model can easily be shown to require large space.
The interested reader may find a more comprehensive list of sliding-windows
results by following the references in~\cite{boz12}.

We focus on many of these same statistical functions and consider them
over inputs of length $2n-1$ where
the sliding-window task is to compute the function
for each window of length $n$, giving $n$ outputs in total.
We write $f^{\boxplus n}$ to denote this sliding-window version of a function
$f$.

Our main results concern the computation of frequency moments and element
distinctness over sliding windows.
Frequency moment $F_0$ is the number of distinct elements in the input.
The element distinctness problem $ED$, determining whether all input elements
are distinct, is the special case of testing whether $F_0$ is equal to the number
of inputs.   $ED$ is often considered the decision problem that best
tracks the complexity of integer sorting,
a problem for which we already know tight time-space tradeoff
lower bounds~\cite{bc:sorting,bea:sorting} on general sequential computation
models like multi-way branching programs and RAMs, as well as matching
comparison-based upper bounds~\cite{pr:comparison-sorting}.
(As is usual, the input is
assumed to be stored in read-only memory and the output in write-only memory
and neither is counted towards the space used by any algorithm.
The multi-way branching program model simulates standard RAM models that
are unit-cost with respect to time and log-cost with respect to space.
Therefore in discussing complexity, we measure space usage in bits
rather than words.)

We prove time-space lower bounds for computing the sliding-window version
of any frequency moment $F_k$ for $k\ne 1$.  In particular, the time $T$ and
space $S$ to compute $F_k^{\boxplus n}$ must satisfy
$T\cdot S \in \Omega(n^2)$.
($F_1$ is simply the size of the input, so computing its value is always
trivial.)
Moreover, we show that the same lower bound holds for computing just
the parity of the number of distinct elements, $F_0\bmod 2$, in each window.
The bounds are proved directly for multi-way branching programs which imply
lower bounds for the standard RAM and word-RAM models, as well as for the data
stream models discussed above.
The best previous lower bounds for computing any of these sliding window
problems are much smaller time-space tradeoff lower bounds that apply to the
computation of a single instance of $F_k$.
In particular, for any $k\ne 1$, an input has distinct elements if any only if
$F_k=n$, so these follow from previous lower bounds for $ED$.
Ajtai~\cite{ajtai:nonlinear-journal}
showed that any linear time solution for $ED$ (and hence $F_k$)
must use linear space.   No larger time lower bound is known unless the space
$S$ is $n^{o(1)}$.  In that case, the best previous lower bound for computing
$ED$ (and hence $F_k$) is a $T \in \Omega(n\sqrt{\log (n/S)/\log\log(n/S)})$
lower bound shown in~\cite{bssv:randomts-journal}.
This is substantially smaller than our $T\cdot S\in \Omega(n^2)$ lower bound.

We complement our lower bound with a comparison-based RAM algorithm for
any $F_k^{\boxplus n}$ which has $T\cdot S\in \tilde O(n^2)$, showing that
this is nearly an asymptotically tight bound, since it provides a general RAM
algorithm that runs in the same time complexity for any polynomial-sized input
alphabet\footnote{As is usual, we use $\tilde O$ to suppress polylogarithmic
factors in $n$.}.

Our lower bounds for frequency moment computation hold for randomized
algorithms even with small success probability $2^{-O(S)}$ and for the average
time and space used by deterministic algorithms on inputs in which the values
are independently and uniformly chosen from $[n]$.

It is interesting to contrast our lower bounds for the sliding-window
version of $F_0\bmod 2$ with those for the sliding-window version of
$ED$.
It is not hard to show that on average for integers independently and uniformly
chosen from $[n]$, $ED$
can be solved with $\overline T \cdot \overline S \in \tilde O(n)$.  This
can be extended to an algorithm that has a similar
$\overline T \cdot \overline S \in \tilde O(n)$ bound for $ED^{\boxplus n}$
on this input distribution.
This formally proves a separation between the complexity of sliding-window
$F_0\bmod 2$ and sliding-window $ED$.
Interestingly, this separation is not known to exist for one window alone.

In fact, we show that the similarity between the complexities of computing $ED$
and $ED^{\boxplus n}$ on average also applies to the worst-case complexity
of deterministic and randomized algorithms.
We give a general reduction which shows that for
any space bound $S$, by using space $S^*\in S+O(\log^2 n)$, one can convert any
algorithm $A$ for $ED$ running in time $T$ into an algorithm $A^*$ that solves
$ED^{\boxplus n}$ in time $T^*\in O(T\log^2 n)$ or alternatively $T^*\in O(T\log n)$
if $T\in \Omega(n^{1+\delta})$.  That is, there is no sliding-window analogue
of a direct sum result for $ED$.

These results suggest that in the continuing search for strong lower complexity
lower bounds, $F_0\bmod 2$ may be a better choice as a difficult decision
problem than $ED$.

Finally, we discuss the problem of computing the $t^{th}$ order statistic in
each window.  For these problems we see the full range of relationships between
the complexities of the original and sliding-window versions of the problems.
In the case of $t=n$ (maximum) or $t=1$ (minimum) we show that
computing these properties over sliding windows can be solved by a comparison
based algorithm in $O(n\log n)$
time and only $O(\log n)$ bits of space and hence there is no sliding-windows
analogue of a direct sum result for these problems.
In contrast, we show that a
$T\cdot S \in \Omega(n^2)$ lower bound holds
when $t  = \alpha n$ for any fixed $0 < \alpha < 1$.
Even for algorithms that only use comparisons, the expected time for
errorless randomized algorithms to find the median in a single window is 
$\overline T\in \Omega(n\log\log_S n)$~\cite{chan:selection-journal} and
there is an errorless randomized algorithm that precisely matches this
bound~\cite{chan:selection-journal}.
Hence for many values of $S$ there is an approximate direct
sum analogue for these sliding-window order statistics.

\paragraph{Related work}

While sliding-windows versions of problems have been considered in the context
of online and approximate computation, there is little research that has
explicitly considered any such problems in the case of exact offline
computation.
One instance where a sliding-windows problem has been considered is
a lower bound for generalized string
matching due to Abrahamson~\cite{abr87}.
This lower bound implies that for any fixed string $y\in [n]^n$
with $n$ distinct values, $H_y^{\boxplus n}$ requires
$T\cdot S\in \Omega(n^2/\log n)$ where decision problem $H_y(x)$ is 1 if and
only if the Hamming distance between $x$ and $y$ is $n$.
This bound is an $\Omega(\log n)$
factor smaller than our lower bound for sliding-window $F_0\bmod 2$.

One of the main techniques to derive time-space tradeoffs for
branching programs was introduced in \cite{bc:sorting} by
Borodin and Cook and was generalized to a number of other problems (e.g.,
\cite{yes84,abr87,abr:tradeoff,bea:sorting,mnt:hashing-journal,sw:multiplyRAM}).
Our lower bounds draw on this method but require some additional
work to adapt it to the case of computing frequency moments.

In addition to lower bounds that apply to unrestricted models such as RAMs
and general branching program models, some of the problems we consider have
been considered in structured comparison-based models.
Borodin et al.~\cite{bfmuw87} gave a time-space tradeoff lower bound for
computing $ED$ (and hence any $F_k$ for $k\ne 1$) on comparison branching
programs of $T^2\cdot S\in\Omega(n^3)$ and since $S\geq \log_2{n}$, $T\cdot S\in\Omega(n^{3/2}\sqrt{\log n})$).
Yao~\cite{yao88} improved this to a near-optimal
$T\cdot S \in \Omega(n^{2-\epsilon(n)})$, where $\epsilon(n)= 5/(\ln n)^{1/2}$.
Since our algorithm for computing $F_k^{\boxplus n}$ is comparison-based,
this lower bound is not far from matching our upper bound for the
sliding-window version of $F_k$.

Finally, we note that previous research implies a separation between the
complexities of $ED$ and $F_0\bmod 2$ in the context of quantum query
algorithms: $ED$ has quantum query complexity $\Theta(n^{2/3})$ (lower bound
in $\cite{as:quantum-collision}$ and matching quantum query algorithm
in~\cite{ambainis:distinctness}).  On other hand, the 
lower bounds in~\cite{bm:qac0} imply that $F_0\bmod 2$ has quantum query
complexity $\Omega(n)$.

\paragraph{Organization}
In the remainder of this section we more formally define
the $\boxplus$ operator, the statistical functions we consider, and the
multi-way branching program model.
In Section~\ref{sec:fk} we present our lower bound for computing frequency
moments $F_k$ and $F_0\bmod 2$ over sliding windows followed by a
comparison-based algorithm that yields a nearly matching upper bound.
In Section~\ref{sec:ed} we give our algorithms for element distinctness over
sliding windows which show the separation between $F_0\bmod 2$ and $ED$.
Finally in Section~\ref{sec:order} we give our upper and lower bounds for
sliding-window computation of order statistics.

\paragraph{Sliding Windows} Let $D$ and $R$ be two finite sets and
$f:D^n\rightarrow R$ be a function over strings of length $n$. We
define the operation \SLIDING, denoted $\boxplus$, that
takes $f$ and returns a function $f^{\boxplus t}:D^{n+t-1}\rightarrow R^t$,
defined by
$f^{\boxplus t} (x) = \left(f(x_i\ldots x_{i+n-1})\right)_{i=1}^{t}$.
We concentrate on the case that $t=n$ and
apply the \SLIDING operator to the
functions $F_k$, $F_k \bmod 2 $, $ED$, and $O_t$, the $t^{th}$ order statistic.
We will use the notation $F_k^{(j)}$
(resp. $f_i^{(j)}$) to denote the $k^{th}$ frequency moment (resp. the frequency of symbol $i$)
of the string in the window of length $n$ starting at position $j$.

\paragraph{Frequency Moments, Element Distinctness, and Order Statistics}
Let $a =  a_1a_2\ldots,a_n$ be a string of $n$ symbols from a
linearly ordered set.
We define the \textit{$k^{th}$ frequency
moment} of $a$, $F_k(a)$, as $F_k(a)= \sum_{i\in D} f_i^k$, where $f_i$ is the
frequency (number of occurrences) of symbol $i$ in the string $a$ and $D$ is
the set of symbols that occur in $a$.
Therefore, $F_0(a)$ is the number of distinct
symbols in $a$ and $F_1(a) = |a|$ for every string $a$.
The \textit{element distinctness} problem is a decision problem
defined as: $ED(a) =1  \mbox{ if } F_0(a) = |a| \mbox{ and } 0
\mbox{ otherwise}.$
We write $ED_n$ for the $ED$ function restricted to inputs $a$ with $|a|=n$.
The \textit{$t^{th}$ order statistic} of $a$,
$O_t$, is the $t^{th}$ smallest symbol in $a$.
Therefore $O_n$ is
the maximum of the symbols of $a$ and $O_{\lceil \frac{n}{2}\rceil}$
is the median.

\paragraph{Branching programs}
Let $D$ and $R$ be finite sets and $n$ and $m$ be two positive integers.
A \textit{$D$-way branching program} is a
connected directed acyclic graph with special nodes: the
\textit{source node} and possibly many \textit{sink nodes}, a set of
$n$ inputs and $m$ outputs.
Each non-sink node is labeled with an input index and every edge is
labeled with a symbol from $D$, which corresponds to the value of
the input indexed at the originating node.
In order not to count the space required for outputs, as is standard,
we assume that each edge can be labelled by some set of output assignments.
For a directed path $\pi$ in a branching program, we call the set of indices of
symbols queried by $\pi$ the \emph{queries} of $\pi$, denoted by $Q_{\pi}$;
we denote the \emph{answers} to those queries by $A_{\pi}:Q_\pi \rightarrow D$
and the outputs produced along $\pi$ as a \emph{partial}
function $Z_\pi:[m]\rightarrow R$.

A branching program computes a function $f:D^n\rightarrow R^m$ by
starting at the source and then proceeding along the nodes of the
graph by querying the inputs associated with each node and following
the corresponding edges.
In the special case that there is precisely one output, without loss of
generality, any edge with this output may instead be assumed to be unlabelled
and lead to a unique sink node associated with its output value.

A branching program $B$ is said to \textit{compute} a function $f$
if for every $x \in D^n$, the output of $B$ on $x$, denoted $B(x)$,
is equal to $f(x)$.
A \textit{computation} (in $B$) on $x$ is a
directed path, denoted $\pi_B(x)$, from the source to a sink in $B$ whose
queries to the input are consistent with $x$.
The time $T$ of a branching program is the length of the
longest path from the source to a sink and the space $S$ is the
logarithm base 2 of the number of the nodes in the branching
program. Therefore, $S \geq \log T$ where we write $\log x$ to denote
$\log_2 x$.

A branching program $B$ \textit{computes $f$ under
$\mu$ with error at most $\eta$} iff $B(x)= f(x)$ for all but an
$\eta$-measure of $x \in D^n$ under distribution $\mu$.
A \textit{randomized} branching program
$\mathcal{B}$ is a probability distribution over deterministic
branching programs with the same input set. $\mathcal{B}$
computes a function $f$ with error at most $\eta$ if for every
input $x \in D^n$, $\Pr_{B\sim\mathcal{B}}[B(x) = f(x)] \geq
1-\eta$. The time (resp. space) of a randomized branching program
is the maximum time (resp. space) of a deterministic branching program
in the support of the distribution.

A branching program is \textit{levelled} if  the nodes are divided
into an ordered collection of sets each called a \textit{ level}
where edges are between consecutive levels only.
Any branching program can be leveled by increasing the space
$S$ by an additive factor of $\log T$. Since $S \geq \log T$, in
the following we assume that our branching programs are
leveled.


\section{Frequency Moments over Sliding Windows}\label{sec:fk}

We begin with our main lower bound for computing frequency moments over
sliding windows and then derive a nearly matching upper bound.

\subsection{A general sequential lower bound for $F_k^{\boxplus n}$ and $(F_0\bmod 2)^{\boxplus n}$}

We derive a time-space tradeoff lower bound for randomized
branching programs computing $F_k^{\boxplus n}$ for $k=0$ and $k\ge 2$.
Further, we show that the lower bound also holds for computing 
$(F_0\bmod 2)^{\boxplus n}$. (Note that the parity of $F_k$ for $k\ge 1$ is
exactly equal to the parity of $n$; thus the outputs of $(F_k\bmod 2)^{\boxplus n}$ are all equal to $n\bmod 2$.)

\begin{theorem}\label{main_theorem}
Let $k=0$ or $k\ge 2$. There is a constant $\delta>0$ such that any $[n]$-way
branching program of time $T$ and space
$S$ that computes $F_k^{\boxplus n}$ with error at most $\eta$,
$0<\eta<1-2^{-\delta S}$, for input randomly chosen uniformly from $[n]^{2n-1}$ must
have $T\cdot S \in \Omega(n^2)$. The same lower bound holds for $(F_0\bmod 2)^{\boxplus n}$.
\end{theorem}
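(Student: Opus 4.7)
The plan is to apply the classic Borodin-Cook partitioning method to the levelled branching program. Partition its execution into $R=\lceil T/q\rceil$ consecutive layers of depth $q$, where $q$ is a parameter to be chosen on the order of $n$. Each window's output is produced inside exactly one layer, so if the program is correct except with probability $\eta$ the total expected number of correctly produced outputs (over the uniform input distribution on $[n]^{2n-1}$) is at least $(1-\eta)n$, and by averaging some layer must produce at least $(1-\eta)nq/T$ correct outputs in expectation. The argument finishes once one proves a matching upper bound of the form $\tilde O(S)$ on the expected number of correct outputs in any single layer; combined with $q=\Theta(n)$ this yields $T\cdot S\in\Omega(n^2)$.

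To bound the correct outputs per layer, fix a layer and classify its executions by their \emph{internal trace}: an entry node (one of $2^S$) together with the at most $q$ query answers read inside the layer, giving at most $2^S n^q$ traces. Each trace $\tau$ determines the rectangle $R_\tau$ of inputs consistent with it. A preliminary trimming step discards traces whose rectangles over-query any single window; since the total number of queries across the whole program is at most $T$, with probability $1-2^{-\Omega(S)}$ the surviving ``balanced'' rectangle retains a constant fraction of free coordinates in every window $j$, distributed independently and uniformly in $[n]$ under the conditional measure.

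For $F_k^{\boxplus n}$ with $k=0$ or $k\ge 2$ the heart of the argument is an anti-concentration bound: on a balanced rectangle, the conditional distribution of $F_k^{(j)}$ places no atom of mass greater than $O(1/\sqrt n)$ on any integer value. For $F_0$ this follows from a local-limit estimate on the number of first occurrences among the $\Theta(n)$ free coordinates of window $j$; for $F_k$ with $k\ge 2$ from a Berry-Esseen bound on the corresponding symmetric sum. Summed over the $2^S n^q$ traces and the at most $q$ outputs produced along each trace, Markov's inequality bounds the expected number of correct outputs per layer by $\tilde O(S)$ as desired.

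The main obstacle is the parity version $(F_0\bmod 2)^{\boxplus n}$, for which the anti-concentration estimate collapses to the trivial $1/2$. The plan here is to exploit the \emph{joint}-correctness requirement: inside a balanced rectangle, changing a free coordinate $x_i$ lying in window $j$ toggles $F_0^{(j)}\bmod 2$ with probability $\Omega(1)$ over a uniform rechoice of $x_i$ in $[n]$, while leaving the parities of all windows not containing position $i$ unchanged. Extracting a disjoint family of $\Omega(n/q)$ windows, each endowed with a ``private'' free coordinate, makes the joint parity vector over the family essentially uniform, so any fixed trace predicts all of them correctly with probability at most $2^{-\Omega(n/q)}$. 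Union-bounding over the $2^S n^q$ traces forces $S+q\log n\in\Omega(n/q)$, which upon optimizing in $q$ again yields $T\cdot S\in\Omega(n^2)$. The delicate part is verifying that the parity-flipping single-coordinate substitutions remain inside the same balanced rectangle and act independently across the chosen window family, which requires carefully selecting the disjoint family from the free cores produced by the trimming step.
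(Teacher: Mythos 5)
Your outer skeleton (Borodin--Cook layering into height-$q$ blocks with $q=\Theta(n)$, reducing to a per-layer correctness bound) matches the paper, but the per-layer analysis has three gaps that I do not see how to repair along the lines you sketch. First, for $F_k$ a per-output anti-concentration bound of $O(1/\sqrt n)$ combined by linearity of expectation over the outputs of a layer gives only $O(q/\sqrt n)=O(\sqrt n)$ expected correct outputs per layer, not $\tilde O(S)$, and that is far too weak to yield $T\cdot S\in\Omega(n^2)$. To get the needed bound you must show that the probability that $r$ outputs are \emph{simultaneously} correct decays like $c^{r}$ for a constant $c<1$, and this is exactly where the window correlations bite: adjacent windows satisfy $y_i-y_{i+1}\in\{-1,0,1\}$ (and $y_i=y_{i+1}$ whenever $x_i=x_{i+n}$), so the $O(1/\sqrt n)$ atoms cannot be multiplied across outputs. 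The paper's solution is a chain rule, taken right-to-left, restricted to output positions $i$ whose symbol $x_i$ is unique in the input (with high probability there are $\Omega(n)$ such positions), conditioned on the event that every window's $F_0$ lies in $[0.5n,0.85n]$; after revealing $y_{i+1}$ and the indicator of $x_{i+n}\in C_i$, the only residual uncertainty is whether $x_i\in C_i$, which has probability bounded in $[2/5,17/18]$. Note that only a \emph{constant} per-output bound is needed, so your local-limit estimate is both unnecessary and, by itself, insufficient.

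Second, union-bounding over the $2^S n^{q}$ traces is fatal when $q=\Theta(n)$: you would need a per-trace success probability below $n^{-q}=2^{-\Omega(n\log n)}$, which nothing in the argument supplies (your own inequality $S+q\log n\in\Omega(n/q)$ is vacuous for $q=\Theta(n)$ and does not optimize to $T\cdot S\in\Omega(n^2)$ for any $q$). The correct accounting is that the source-to-sink paths \emph{within one} height-$q$ subprogram partition the inputs, so one conditions on the path and unions only over the $2^S$ subprograms; a per-path bound of $(17/18)^{\Theta(S)}<3^{-S}$ then closes the argument. Third, the parity argument rests on ``a disjoint family of $\Omega(n/q)$ windows,'' but the windows have length $n$ inside an input of length $2n-1$, so at most two of them are disjoint, and a single free coordinate lies in up to $n$ windows, so single-coordinate toggles do not act independently across windows; moreover $n/q=O(1)$ for $q=\Theta(n)$. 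In fact the parity case requires no new mechanism: the same chain-rule bound applies verbatim because the uncertain event is still whether $x_{i}\in C_{i}$, and the constant $17/18$ per unique output is all that is ever used.
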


\begin{corollary}
\label{average-and-random}
Let $k=0$ or $k\ge 2$.
\begin{itemize}
\item The average time $\overline T$ and average space
$\overline S$ needed
to compute $(F_k)^{\boxplus n}(x)$ for $x$ randomly chosen uniformly
from $[n]^{2n-1}$ satisfies $\overline T\cdot \overline S\in \Omega(n^2)$.
\item For $0<\eta<1-2^{-\delta S}$, any $\eta$-error randomized RAM or
word-RAM algorithm
computing $(F_k)^{\boxplus n}$ using time $T$ and space $S$
satisfies $T\cdot S\in\Omega(n^2)$.
\end{itemize}
\end{corollary}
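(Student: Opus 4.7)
The plan is to derive both statements of the corollary from Theorem~\ref{main_theorem} by standard reductions: an averaging-over-randomness argument for part~(ii), and a Markov-inequality truncation argument for part~(i). The RAM and word-RAM cases then follow from the fact that such algorithms are simulated by $[n]$-way branching programs with the same time and with space measured in bits (as noted in the model discussion preceding Theorem~\ref{main_theorem}).

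For part~(ii), I would let $\mathcal{B}$ be a randomized RAM/word-RAM algorithm with worst-case time $T$, worst-case space $S$, and error at most $\eta$ on every input. Viewing $\mathcal{B}$ as a distribution over deterministic $[n]$-way branching programs of time $T$ and space $S$, the expected error over the choice of coins on the uniform distribution $\mu$ on $[n]^{2n-1}$ is at most $\eta$. By an averaging argument, there exists a deterministic branching program $B$ in the support of $\mathcal{B}$ whose error under $\mu$ is also at most $\eta$; $B$ has time $T$ and space $S$ as well. Since $\eta < 1-2^{-\delta S}$, Theorem~\ref{main_theorem} applies to $B$ and yields $T\cdot S \in \Omega(n^2)$.

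For part~(i), I would start from a deterministic RAM algorithm $A$ with $\E_{x\sim\mu}[T(x)] = \overline T$ and $\E_{x\sim\mu}[S(x)] = \overline S$, where $\mu$ is uniform on $[n]^{2n-1}$. By Markov's inequality, $\Pr_{x\sim\mu}[T(x) > 4\overline T] \le 1/4$ and $\Pr_{x\sim\mu}[S(x) > 4\overline S] \le 1/4$, so on at least half of the inputs $A$ runs in time at most $4\overline T$ using at most $4\overline S$ bits of space. Truncate $A$ to a new algorithm $A'$ which halts and outputs an arbitrary default answer whenever it would exceed either of these bounds. Then $A'$ has worst-case time $O(\overline T)$ and worst-case space $O(\overline S)$, and errs on at most a $1/2$ fraction of inputs under $\mu$. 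Simulating $A'$ as an $[n]$-way branching program and applying Theorem~\ref{main_theorem} with $\eta = 1/2$ gives $\overline T\cdot \overline S \in \Omega(n^2)$, once $n$ is large enough that $1/2 < 1-2^{-\delta S}$ (which holds as soon as $S \ge \log_2 n \ge 2/\delta$, an assumption we may make without loss of generality since otherwise the tradeoff bound is trivial).

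The only mildly delicate point is the Markov truncation in part~(i): we need both the time and the space truncation to be simultaneously respected by the resulting branching program, and we need to ensure that when $A$ is terminated early (for exceeding either bound) we can still describe the result as a branching program of the claimed parameters. This is routine: halt-and-output-default states are handled by adding constantly many extra nodes, which does not change $S$ or $T$ asymptotically. I do not foresee any genuinely hard step; the essential content of the corollary is already in Theorem~\ref{main_theorem}, and these reductions are standard conversions between worst-case/distributional and deterministic/randomized complexity.
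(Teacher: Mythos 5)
Your proposal is correct and matches the paper's intent: the paper states the corollary without an explicit proof, relying on exactly the standard reductions you give (averaging over the coins to extract a deterministic distributional-error branching program, and Markov truncation of time and space to convert average-case resource bounds into a worst-case-bounded, constant-error program to which Theorem~\ref{main_theorem} applies). Both your handling of the simultaneous time/space truncation via a union bound and your check that $\eta=1/2<1-2^{-\delta S}$ for $S\ge\log n$ are the right details to verify.
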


\begin{proof}[Proof of Theorem~\ref{main_theorem}]
We derive the lower bound for $F_0^{\boxplus n}$ first.
Afterwards we show the modifications
needed for $k \geq 2$ and for computing
$(F_0\bmod 2)^{\boxplus n}$.
For convenience, on input $x\in [n]^{2n-1}$, we write $y_i$ for the output
$F_k(x_i,\ldots,x_{i+n-1})$.

We use the general approach
of Borodin and Cook~\cite{bc:sorting} together with the observation
of~\cite{abr:tradeoff} of how it applies to average case complexity and
randomized branching programs.
In particular, we divide the branching program $B$ of length $T$ into layers
of height $q$ each.
Each layer is now a collection of small branching programs $B'$, each of
whose start node is a node at the top level of the layer.
Since the branching program must produce $n$ outputs for each input $x$,
for every input $x$ there exists
a small branching program $B'$ of height $q$ in some layer that produces at
least $nq/T>S$ outputs.
There are at most $2^S$ nodes in $B$ and hence there are at most $2^S$ such
small branching programs among all the layers of $B$.
One would normally prove that the fraction of $x\in [n]^{2n-1}$ for which
any one such small program correctly produces the desired number of outputs
is much smaller than $2^{-S}$ and hence derive the desired lower bound.
Usually this is done by arguing that the fraction of inputs consistent with
any path in such a small branching program for which a fixed set of outputs
is correct is much smaller than $2^{-S}$.

This basic outline is more complicated in our argument.
One issue is that if a path in a small program $B'$ finds that certain
values are equal, then the answers to nearby windows may be strongly correlated
with each other; for example, if $x_i=x_{i+n}$ then $y_i=y_{i+1}$.
Such correlations risk making the likelihood too high that the correct
outputs are produced on a path. Therefore, 
instead of considering the total number of outputs produced, 
we reason about the number of outputs from positions that are not duplicated
in the input and argue that with high probability there will be a linear number
of such positions.

A second issue is that inputs for which
the value of $F_0$ in a window happens to be extreme, say $n$ - all distinct -
or 1 - all identical, allow an almost-certain prediction of the value of
$F_0$ for the next window.
We will use the fact that under the uniform distribution, cases like these
almost surely do not happen; indeed the numbers of
distinct elements in every window almost surely fall in a range close to
their mean and in this case the value in the next window will be
predictable with probability bounded below 1 given the value in the previous
ones.
In this case we use the chain rule to compute the overall probability
of correctness of the outputs.

We start by analyzing the likelihood that an output of $F_0$ is extreme.

\begin{lemma}\label{lemma:f0_range_prob_a}
Let $a$ be chosen uniformly at random from $[n]^{n}$.
Then the probability that $F_0(a)$ is
between $0.5n$ and $0.85n$ is at least $1-2 e^{-n/50}$.
\end{lemma}

\begin{proof}
For $a=a_1\ldots a_n$ uniformly chosen from $[n]^n$,
\[
\mathbb{E}[F_0(a)] = \sum_{\ell\in [n]}
\Pr_a[\exists i \in [n]\mbox{ such that } a_i= \ell] = n [ 1- (1-1/n)^n ].
\]
Hence $0.632n<(1-1/e)n<\mathbb{E}[F_0(a)]\le 0.75 n$.
Define a Doob martingale
$D_t$, $t= 0,1,\ldots, n$ with respect to the sequence $a_1\ldots a_n$
by $D_t = \mathbb{E}[F_0(a)\mid a_1\ldots a_t]$.
Therefore $D_0 = \mathbb{E}[F_0(a)]$ and $D_n = F_0(a)$.
Applying the Azuma-Hoeffding inequality, we have
\[
\Pr_a[F_0(a)\notin [0.5n,0.85n]]\le
\Pr_a[|F_0(a)-\mathbb{E}[F_0(a)| \geq 0.1n]
\le 2 e^{- 2\frac{(0.1n)^2}{n}} = 2 e^{-n/50},\]
which proves the claim.
\end{proof}

We say that $x_j$ is \emph{unique in $x$} iff 
$x_j \notin \{x_1,\ldots,x_{j-1},x_{j+1},\ldots,x_{2n-1}\}$.

\begin{lemma}\label{lemma:f0_range_prob}\label{manyruns}
Let $x$ be chosen uniformly at random from $[n]^{2n-1}$ with $n \geq 2$.
With probability at least $1-4n e^{-n/50}$,
\begin{enumerate}[(a)]
\item all outputs of $F_0^{\boxplus n}(x)$ are between $0.5n$ and $0.85n$,
and
\item the number of positions $j < n$ such that $x_j$ is unique in $x$ is at
least $n/24$.
\end{enumerate}
\end{lemma}

\begin{proof}
We know from Lemma~\ref{lemma:f0_range_prob_a} and the union bound that part (a) is false with probability at most $2 n e^{-n/50}$. 
For any $j<n$, let $U_j$ be the indicator variable of the event that $j$ is
unique in $x$ and $U=\sum_{j<n} U_j$.
Now $\E(U_j)=(1-1/n)^{2n-2}$ so $\E(U)= (n-1) (1-1/n)^{2n-2} \geq n/8$ for
$n\ge 2$.
Observe also that this is a kind of typical ``balls in bins" problem 
and so, as discussed in~\cite{dp:concentration-book}, it has the
property that the random variables $U_j$ are \emph{negatively
associated}; for example, for disjoint $A,A'\subset[n-1]$,  the larger 
$\sum_{j\in A} U_j$ is, the smaller $\sum_{j\in A'} U_j$ is likely to be.
Hence, it follows~\cite{dp:concentration-book}
that $U$ is more closely concentrated around its mean than if
the $U_j$ were fully independent.
It also therefore follows that we can apply a Chernoff bound directly to our
problem, giving
$\Pr[U \leq n/24] \leq \Pr[U \leq \E(U)/3] \leq e^{-2\E(U)/9} \leq e^{-n/36}$.
We obtain the desired bound for parts (a) and (b) together by another
application of the union bound.
\end{proof}

\paragraph{Correctness of a small branching program for computing outputs
in $\pi$-unique positions}

\begin{definition}
Let $B'$ be an $[n]$-way branching program and let $\pi$ be a source-sink
path in $B'$ with queries $Q_\pi$ and answers $A_\pi:Q_\pi\rightarrow [n]$.  
An index $\ell<n$ is said to be \emph{$\pi$-unique} iff 
either (a) $\ell \notin Q_\pi$, or 
(b) $A_\pi(\ell) \notin A_\pi(Q_\pi-\{\ell\})$.
\end{definition}

In order to measure the correctness of a small branching program, we 
restrict our attention to outputs that are produced at positions that are
$\pi$-unique and upper-bound the probability that a small branching program
correctly computes outputs of $F_0^{\boxplus n}$ at many $\pi$-unique positions
in the input.

Let $\mathcal{E}$ be the event that all outputs of $F_0^{\boxplus n}(x)$
are between $0.5n$ and $0.85n$.

\begin{lemma}
\label{runs-lemma}
Let $r> 0$ be a positive integer, let $\epsilon \le 1/10$, and let
$B'$ be an $[n]$-way branching program of height $q=\epsilon n$.
Let $\pi$ be a path in $B'$ on which
outputs from at least $r$ $\pi$-unique positions are produced.
For random $x$ uniformly chosen from $[n]^{2n-1}$,
\[
\Pr[\mbox{these $r$ outputs are correct for $F_0^{\boxplus n}(x)$},\mathcal{E}
\mid \pi_{B'}(x)=\pi]\le (17/18)^r.
\]
\end{lemma}

\begin{proof}
Roughly, we will show that when $\mathcal{E}$ holds (outputs for all
windows are not extreme) then, conditioned
on following any path $\pi$ in $B'$, each output
produced for a $\pi$-unique position will have only a constant probability
of success conditioned on any outcome for the previous outputs.  Because of the
way outputs are indexed, it will be convenient to consider these outputs
in right-to-left order.

Let $\pi$ be a path in $B'$, $Q_\pi$ be the set of queries along $\pi$,
$A_\pi:Q_\pi\rightarrow [n]$ be the answers along $\pi$, and
$Z_\pi:[n]\rightarrow [n]$ be the partial function denoting the outputs 
produced along $\pi$.
Note that $\pi_{B'}(x)=\pi$ if and only if $x_i=A_\pi(i)$ for all $i\in Q_\pi$.

Let $1\le i_1<i_2<\ldots < i_r<n$ be the first $r$ of the $\pi$-unique
positions on which $\pi$ produces output values; i.e.,
$\{i_1,\ldots, i_r\}\subseteq \mathrm{dom}(Z_\pi)$.
Define $z_{i_1}=Z_\pi(i_1),\ldots, z_{i_r}=Z_\pi(i_r)$.

We will decompose the probability over the input $x$ that $\mathcal{E}$ and all of $y_{i_1}=z_{i_1},\ldots, y_{i_r}=z_{i_r}$ hold via the chain rule.   
In order to do so,
for $\ell \in [r]$, we define event 
$\mathcal{E}_\ell$ to be $0.5n \le F_0^{(i)}(x)\le 0.85n$ for all $i>i_{\ell}$.
We also write $\mathcal{E}_{0} \myeq \mathcal{E}$.  Then
\begin{align}
&\Pr[y_{i_1}=z_{i_1},\ldots,y_{i_r}=z_{i_r},\ \mathcal{E}
\mid \pi_{B'}(x)=\pi]\nonumber\\
& = \Pr[\mathcal{E}_r\mid \pi_{B'}(x)=\pi]\cdot \prod_{\ell = 1}^{r} \Pr [y_{i_\ell}=z_{i_\ell},\ \mathcal{E}_{\ell-1}\mid
y_{i_{\ell+1}}= z_{i_{\ell+1}},\ \ldots,\ y_{i_{r}}= z_{i_{r}},\ \mathcal{E}_\ell,
\ \pi_{B'}(x)=\pi]\nonumber\\
& \le \prod_{\ell = 1}^{r} \Pr [y_{i_\ell}=z_{i_\ell}\mid
y_{i_{\ell+1}}= z_{i_{\ell+1}},\ \ldots,\ y_{i_{r}}= z_{i_{r}},\ \mathcal{E}_\ell,
\ \pi_{B'}(x)=\pi].
\label{chain rule}
\end{align}

We now upper bound each term in the product in~\eqref{chain rule}.
Depending on how much larger $i_{\ell+1}$ is than $i_\ell$, the conditioning
on the value of $y_{i_{\ell+1}}$ may imply a lot of information about
the value of $y_{i_\ell}$, but we will show that even if we reveal more
about the input, the value of $y_{i_\ell}$ will still have a constant amount
of uncertainty.

For $i\in [n]$, let $W_i$ denote the vector of input elements
$(x_i,\ldots,x_{i+n-1})$, and note that $y_i=F_0(W_i)$; we call $W_i$ the
$i^\mathrm{th}$ window of $x$.
The values $y_i$ for different windows may be closely related.
In particular, adjacent windows $W_i$ and $W_{i+1}$ have numbers of distinct
elements that can differ by at most 1 and this depends on whether the
extreme end-points of the two windows, $x_i$ and $x_{i+n}$, appear among
their common elements $C_i=\{x_{i+1},\ldots,x_{i+n-1}\}$.
More precisely,
\begin{equation} 
y_{i} - y_{i+1}=
\textbf{1}_{\{x_{i} \not \in C_i\}}
-
\textbf{1}_{\{x_{i+n} \not \in C_i\}}.
\label{indicators}
\end{equation}
In light of \eqref{indicators}, the basic idea of our argument is that,
because $i_\ell$ is $\pi$-unique and because of the conditioning on
$\mathcal{E}_\ell$, there will be enough uncertainty about
whether or not
$x_{i_\ell} \in C_{i_\ell}$  to show that the value of
$y_{i_\ell}$ is uncertain even if we reveal 
\begin{enumerate}
\item the value of the indicator
$\textbf{1}_{\{x_{i_{\ell}+n} \not \in C_{i_{\ell}}\}}$, and
\item the value of the output $y_{i_{\ell}+1}$.
\end{enumerate}

We now make this idea precise in bounding each term in the product
in~\eqref{chain rule}, using $\mathcal{G}_{\ell+1}$ to denote the event 
$\{y_{i_{\ell+1}}= z_{i_{\ell+1}},\ \ldots,\ y_{i_{r}}= z_{i_{r}}\}$.
\begin{align}
\Pr &[y_{i_\ell}=z_{i_\ell}\mid \mathcal{G}_{\ell+1},\ \mathcal{E}_\ell,
\ \pi_{B'}(x)=\pi]\notag\\
=&\sum_{m=1}^n \sum_{b \in \{0,1\}}
\Pr [y_{i_\ell}= z_{i_\ell}\mid
y_{i_{\ell} +1} = m, \textbf{1}_{\{x_{i_{\ell}+n} \not \in C_{i_{\ell}}\}} =b,\mathcal{G}_{\ell+1}, \mathcal{E}_\ell,\pi_{B'}(x)=\pi]\notag\\
& \qquad\times \Pr[y_{i_{\ell} +1} = m , \textbf{1}_{\{x_{i_{\ell}+n} \not \in C_{i_{\ell}}\}} =b\mid \mathcal{G}_{\ell+1}, \mathcal{E}_\ell,\pi_{B'}(x)=\pi]\notag\\
\leq & \max_{\substack{m \in [0.5n,0.85n]\\ b\in \{0,1\}}}
\Pr [y_{i_\ell}= z_{i_\ell}\mid
y_{i_{\ell} +1} = m, \textbf{1}_{\{x_{i_{\ell}+n} \not \in C_{i_{\ell}}\}} =b,
\mathcal{G}_{\ell+1}, \mathcal{E}_\ell,\pi_{B'}(x)=\pi]\notag\\
= & \max_{\substack{m \in [0.5n,0.85n]\\ b\in \{0,1\}}}
\Pr [ \textbf{1}_{\{x_{i_{\ell}} \not \in C_{i_{\ell}}\}}
= z_{i_\ell} -m+b\mid \notag\\[-3ex]
&\qquad\qquad\qquad\qquad y_{i_{\ell} +1} = m,\textbf{1}_{\{x_{i_{\ell}+n} \not \in C_{i_{\ell}}\}} =b,
\mathcal{G}_{\ell+1}, \mathcal{E}_\ell,\pi_{B'}(x)=\pi]
\label{correct-diff}
\end{align}
where the inequality follows because the conditioning
on $\mathcal{E}_\ell$ implies that
$y_{i_\ell +1}$ is between $0.5n$ and $0.85n$ and
the last equality follows because of the conditioning together with
\eqref{indicators} applied with $i=i_\ell$.  
Obviously, unless $z_{i_\ell}-m+b\in \{0,1\}$ the probability of the
corresponding in the maximum in \eqref{correct-diff} will be 0.
We will derive our bound by showing that given all the conditioning in
\eqref{correct-diff}, the probability of the event
$\{x_{i_{\ell}} \not \in C_{i_{\ell}}\}$ is between $2/5$ and $17/18$
and hence each term in the product in \eqref{chain rule} is at most $17/18$.

\paragraph{Membership of $x_{i_\ell}$ in $C_{i_\ell}$:} 
First note that the conditions $y_{i_\ell+1}=m$ and 
$\textbf{1}_{\{x_{i_{\ell}+n} \not \in C_{i_{\ell}}\}} =b$ together
imply that $C_{i_{\ell}}$ contains precisely $m-b$ distinct values.
We now use the fact that $i_\ell$ is $\pi$-unique and, hence,
either $i_\ell\notin Q_\pi$ or $A_\pi(i_\ell)\notin A_\pi(Q_\pi-\{i_\ell\})$.

First consider the case that $i_\ell\notin Q_\pi$.  
By definition, the events $y_{i_\ell+1}=m$, 
$\textbf{1}_{\{x_{i_{\ell}+n} \not \in C_{i_{\ell}}\}} =b$,
$\mathcal{E}_\ell$, and
$\mathcal{G}_{\ell+1}$ only depend on $x_i$ for $i>i_\ell$ and 
the conditioning on $\pi_{B'}(x)=\pi$ is only a property of
$x_i$ for $i\in Q_\pi$.   Therefore, under all the conditioning in
\eqref{correct-diff},
$x_{i_\ell}$ is still a uniformly random value in $[n]$.
Therefore the probability that 
$x_{i_\ell}\in C_{i_\ell}$ is precisely $(m-b)/n$ in this case.

Now assume that $i_\ell\in Q_\pi$.  In this case, the conditioning
on $\pi_{B'}(x)=\pi$ implies that $x_{i_\ell}=A_\pi(i_\ell)$ is fixed
and not in $A_\pi(Q_\pi -\{i_\ell\})$.
Again, from the conditioning we know that $C_{i_\ell}$ contains precisely
$m-b$ distinct values.  
Some of the elements that occur in $C_{i_\ell}$ may be inferred from the
conditioning -- for example, their values may have been queried along $\pi$ --
but we will show that there is significant uncertainty about whether any
of them equals $A_\pi(i_\ell)$.
In this case we will show that the uncertainty persists even if we reveal
(condition on) the locations of
all occurences of the elements $A_\pi(Q_\pi -\{i_\ell\})$ among the $x_i$
for $i>i_\ell$.

Other than the information revealed about the occurences of the elements
$A_\pi(Q_\pi -\{i_\ell\})$ among the $x_i$ for $i>i_\ell$,
the conditioning on the events $y_{i_\ell+1}=m$, 
$\textbf{1}_{\{x_{i_{\ell}+n} \not \in C_{i_{\ell}}\}} =b$,
$\mathcal{E}_\ell$, and
$\mathcal{G}_{\ell+1}$,
only biases the numbers of distinct elements and patterns of equality among
inputs $x_i$ for $i>i_\ell$.
Further the conditioning on $\pi_{B'(x)}=\pi$ does not reveal anything
more about the inputs in $C_{i_\ell}$ than is given by the occurences of
$A_\pi(Q_\pi -\{i_\ell\})$.
Let $\mathcal{A}$ be the event that all the conditioning is true.

Let $q'=|A_\pi(Q_\pi -\{i_\ell\})|\le q-1$
and let $q''\le q'$ be the number of distinct elements of
$A_\pi(Q_\pi -\{i_\ell\})$ that appear in $C_{i_\ell}$.
Therefore, since the input is uniformly chosen,  subject to the conditioning,
there are $m-b-q''$ distinct elements
of $C_{i_\ell}$ not among $A_\pi(Q_\pi -\{i_\ell\})$, and these distinct
elements are
uniformly chosen from among the elements $[n]-A_\pi(Q_\pi-\{i_\ell\})$.
Therefore, the probability that any of these $m-b-q''$ elements is
equal to $x_{i_\ell}=A_\pi(i_\ell)$ is precisely $(m-b-q'')/(n-q')$ in this
case.

It remains to analyze the extreme cases of 
the probabilities $(m-b)/n$ and $(m-b-q'')/(n-q')$ from the discussion above.
Since $q=\epsilon n$, $q''\le q'\le q-1$, and $b\in \{0,1\}$, we have the
probability
$\Pr[x_{i_\ell} \in C_{i_\ell}\mid \mathcal{A}] \leq \frac{m}{n - q+1} \leq
\frac{0.85n}{n- \epsilon n} \leq
\frac{0.85n}{n(1-\epsilon)} \leq  0.85/(1-\epsilon)\le 17/18$ since
$\epsilon\le 1/10$.
Similarly,
$\Pr[x_{i_\ell} \notin C_{i_\ell}\mid \mathcal{A}] 
< 1- \frac{m-q}{n}\leq 1- \frac{ 0.5 n- \epsilon n}{n} \leq 0.5+\epsilon\le 3/5$ since $\epsilon \le 1/10$.
Plugging in the larger of these upper bounds in~\eqref{chain rule},
we get:
\[
\Pr [z_{i_1}, \ldots, z_{i_r} \mbox{ are correct for }
F_0^{\boxplus n}(x) ,\ \mathcal{E}\mid \pi_{B'}(x)=\pi]
\leq (17/18)^{r},
\]
which proves the lemma.
\end{proof}

\paragraph{Putting the Pieces Together}
We now combine the above lemmas.
Suppose that $T S \le n^2/4800$ and let $q=n/10$.
We can assume without loss of generality that $S\ge \log_2 n$ since
we need $T\ge n$ to determine even a single answer.

Consider the fraction of inputs in $[n]^{2n-1}$ on which $B$ correctly computes
$F_0^{\boxplus n}$.
By Lemma~\ref{manyruns},
for input $x$ chosen uniformly from $[n]^{2n-1}$, the probability that
$\mathcal{E}$ holds and there are at least
$n/24$ positions $j<n$ such that $x_j$ is unique in $x$
is at least $1-4ne^{-n/50}$.
Therefore, in order to be correct on any such $x$, $B$ must correctly
produce outputs from at least $n/24$ outputs at positions $j<n$ such that 
$x_j$ is unique in $x$.

For every such input $x$, by our earlier outline, one of the $2^S$
$[n]$-way branching programs $B'$ of height $q$ contained in $B$ produces
correct output values for $F_0^{\boxplus n}(x)$  in at least
$r=(n/24)q/T\ge 20S$ positions $j<n$ such that $x_j$ is unique in $x$.

We now note that for any $B'$, if $\pi=\pi_{B'}(x)$ then the fact
that $x_j$ for $j<n$ is unique in $x$ implies that $j$ must be $\pi$-unique.
Therefore, for all but a $4ne^{-n/50}$ fraction of inputs $x$
on which $B$ is correct, $\mathcal{E}$ holds for $x$ and there is one of the
$\le 2^S$ branching programs
$B'$ in $B$ of height $q$ such that the path $\pi=\pi_{B'}(x)$ produces
at least $20S$ outputs at $\pi$-unique positions 
that are correct for $x$.

Consider a single such program $B'$. By Lemma~\ref{runs-lemma}
for any path $\pi$ in $B'$, the fraction of inputs
$x$ such that $\pi_{B'}(x)=\pi$ for which $20S$ of these outputs are
correct for $x$ and produced at $\pi$-unique positions, and
$\mathcal{E}$ holds for $x$ is at most $(17/18)^{20S}< 3^{-S}$.
By Proposition~\ref{manyruns}, this same bound applies to the fraction
of all inputs $x$ with $\pi_{B'}(x)=\pi$ for which $20S$ of these outputs
are correct from $x$ and produced at $\pi$-unique positions, and
$\mathcal{E}$ holds for $x$ is at most $(17/18)^{20S}< 3^{-S}$.

Since the inputs following different paths in $B'$ are disjoint,
the fraction of all inputs $x$ for which $\mathcal{E}$ holds
and which follow some path in $B'$ that yields at least
$20S$ correct answers from distinct runs of $x$ is less than $3^{-S}$.
Since there are at most $2^S$ such height $q$ branching programs,
one of which must produce $20S$ correct outputs from distinct runs of $x$
for every remaining input,
in total only a $2^S 3^{-S}=(2/3)^S$ fraction of all inputs have these
outputs correctly produced.

In particular this implies that $B$ is correct on at most a
$4ne^{-n/50}+(2/3)^S$
fraction of inputs.
For $n$ sufficiently large
this is smaller than $1-\eta$ for any $\eta<1-2^{-\delta S}$ for some
$\delta>0$, which contradicts our original assumption.  This completes the proof of Theorem~\ref{main_theorem}.
\end{proof}

\paragraph{Lower bound for $ (F_0\bmod 2)^{\boxplus n}$}
We describe how to modify the proof of Theorem~\ref{main_theorem} for
computing $F_0^{\boxplus n}$
to derive the same lower bound for computing $(F_0\bmod 2)^{\boxplus n}$.
The only difference is in the proof of Lemma~\ref{runs-lemma}.
In this case, each output $y_i$ is $F_0(W_i)\bmod 2$ rather than $F_0(W_i)$
and \eqref{indicators} is replaced by
\begin{equation} 
y_{i} = 
(y_{i+1}+ \textbf{1}_{\{x_{i} \not \in C_i\}}
-
\textbf{1}_{\{x_{i+n} \not \in C_i\}})\bmod 2.
\end{equation}
The extra information revealed (conditioned on) will be the same as in the
case for $F_0^{\boxplus n}$ but, because the meaning of $y_i$ has changed,
the notation $y_{i_\ell+1}=m$ is replaced by $F_0(W_{i_\ell+1})=m$, $y_{i_\ell+1}$ is then $m\bmod 2$, and the
upper bound in \eqref{correct-diff} is replaced by
\begin{align*}
\max_{\substack{m \in [0.5n,0.85n]\\ b\in \{0,1\}}}
\Pr [ \textbf{1}_{\{x_{i_{\ell}} \not \in C_{i_{\ell}}\}}
=&(z_{i_\ell} -m+b)\bmod 2\mid \notag\\[-3ex]
&F_0(W_{i_{\ell} +1}) = m,\textbf{1}_{\{x_{i_{\ell}+n} \not \in C_{i_{\ell}}\}} =b,
\mathcal{G}_{\ell+1}, \mathcal{E}_\ell,\pi_{B'}(x)=\pi]
\end{align*}
The uncertain event is exactly the same as before, namely whether or not
$x_{i_{\ell}} \in C_{i_{\ell}}$ and the conditioning is essentially exactly
the same, yielding an upper bound of $17/18$.
Therefore the analogue of Lemma~\ref{runs-lemma} also holds for
$(F_0\bmod 2)^{\boxplus n}$ and hence the time-space tradeoff of
$T\cdot S\in \Omega(n^2)$  follows as before.

\paragraph{Lower Bound for  $F_k^{\boxplus n}$, $k\geq 2$}
We describe how to modify the proof of Theorem~\ref{main_theorem} for
computing $F_0^{\boxplus n}$
to derive the same lower bound for computing $F_k^{\boxplus n}$ for $k\ge 2$.
Again, the only difference is in the proof of Lemma~\ref{runs-lemma}.
The main change from the case of $F_0^{\boxplus n}$
is that we need to replace \eqref{indicators} relating the values of consecutive
outputs.
For $k\ge 2$,
recalling that $f^{(i)}_j$ is the frequency of symbol $j$ in window $W_i$, we
now have
\begin{equation}
\label{indicators-k}
 y_{i} - y_{i+1} =\left[\left(f^{(i)}_{x_i}\right)^k - \left(f^{(i)}_{x_i} - 1 \right)^k \right] - \left[\left(f^{(i+1)}_{x_{i+n}} \right)^k - \left(f^{(i+1)}_{x_{i+n}} -1 \right)^k\right].
\end{equation}
We follow the same outline as in the case $k=0$ in order to bound the
probability that $y_{i_\ell}=z_{i_\ell}$
but we reveal the following information, which is somewhat more
than in the $k = 0$ case: 
\begin{enumerate}
\item $y_{i_\ell+1}$, the value of the output immediately after $y_{i_\ell}$,
\item $F_{0}(W_{i_\ell+1})$,
the number of distinct elements in $W_{i_\ell+1}$, and
\item $f^{(i_\ell+1)}_{x_{i_\ell+n}}$,
the frequency of $x_{i_\ell+n}$ in  $W_{i_\ell+1}$.
\end{enumerate}
For $M\in \mathbb{N}$, $m\in [n]$ and $1\le f\le m$,
define $\mathcal{C}_{M,m,f}$ be the event that
$y_{i_\ell+1}=M$, 
$F_{0}(W_{i_\ell+1})=m$, and
$f^{(i_\ell+1)}_{x_{i_\ell+n}}=f$.  Note that $\mathcal{C}_{M,m,f}$ only
depends on the values in $W_{i_\ell+1}$, as was the case for the information
revealed in the case $k=0$.
As before we can then upper bound the $\ell^\mathrm{th}$ term in the
product given in \eqref{chain rule} by
\begin{equation}
\label{upper-k}
\max_{\substack{m \in [0.5n,0.85n]\\ M\in \mathbb{N},f\in [m]}}
\Pr [y_{i_\ell}= z_{i_\ell}\mid
\mathcal{C}_{M,m,f},
\mathcal{G}_{\ell+1}, \mathcal{E}_\ell,\pi_{B'}(x)=\pi]
\end{equation}
Now, by \eqref{indicators-k}, given event $\mathcal{C}_{M,m,f}$, we have
$y_{i_\ell}=z_{i_\ell}$ if and only if
$z_{i_\ell} - M =\left[\left(f^{(i_\ell)}_{x_{i_\ell}}\right)^k - \left(f^{(i_\ell)}_{x_{i_\ell}} - 1 \right)^k \right] - \left[f^k - (f-1)^k\right]$, which we can express as
as a constraint on its only free parameter $f^{(i)}_{x_i}$,
$$\left(f^{(i_\ell)}_{x_{i_\ell}}\right)^k - \left(f^{(i_\ell)}_{x_{i_\ell}} - 1 \right)^k 
=z_{i_\ell} - M - f^k + (f-1)^k.$$  
Observe that this constraint can be satisfied for at most one positive
integer value of $f^{(i_\ell)}_{x_{i_\ell}}$ and that, by definition,
$f^{(i_\ell)}_{x_{i_\ell}}\ge 1$.
Note that $f^{(i_\ell)}_{x_{i_\ell}}=1$ if and only if
$x_{i_\ell}\notin C_{i_\ell}$, where $C_{i_\ell}$ is defined as in the case
$k=0$.
The probability that 
$f^{(i_\ell)}_{x_{i_\ell}}$ takes on a particular value is at most the larger
of the probability that $f^{(i_\ell)}_{x_{i_\ell}}=1$ or that
$f^{(i_\ell)}_{x_{i_\ell}}> 1$
and hence \eqref{upper-k} is at most
\begin{equation*}
\max_{\substack{m \in [0.5n,0.85n]\\ M\in \mathbb{N},f\in [m],c\in \{0,1\}}}
\Pr [ \textbf{1}_{\{x_{i_{\ell}} \not \in C_{i_{\ell}}\}}=c
\mid \mathcal{C}_{M,m,f},
\mathcal{G}_{\ell+1}, \mathcal{E}_\ell,\pi_{B'}(x)=\pi]
\end{equation*}
We now can apply similar reasoning to the $k=0$ case to argue that this
is at most $17/18$:  The only difference is that 
$\mathcal{C}_{M,m,f}$ replaces the conditions 
$y_{i_\ell+1}=F_0(W_{i_\ell+1})=m$ and
$\textbf{1}_{\{x_{i_{\ell}+n} \not \in C_{i_{\ell}}\}} =b$.
It is not hard to see that the same reasoning still applies with the new
condition.  The rest of the proof follows as before.

\subsection{A time-space efficient algorithm for $F_k^{\boxplus n}$}

We now show that the above time-space tradeoff lower bound is nearly
optimal even for restricted RAM models.

\begin{theorem}
There is a comparison-based deterministic RAM algorithm for computing
$F^{\boxplus n}_k$ for any fixed integer $k \geq 0$ with time-space tradeoff
$T\cdot S\in O(n^2\log^2{n})$ for all space bounds $S$ with $\log n\le S\le n$.
\end{theorem}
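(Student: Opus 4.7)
The plan is to combine two ingredients: a static alphabet partition that is reused across all windows, and a tight interleaving of \emph{window batching} with \emph{alphabet-interval passes} so that every cell of working memory is used either to hold current frequencies for one alphabet interval or to accumulate partial $F_k$ values for one batch of windows.  Concretely, fix $M=B=\Theta(S/\log n)$, chosen so that a balanced BST holding up to $M$ frequencies together with a buffer of $B$ partial output sums fits in $O(S)$ bits.  First I would deterministically partition $[n]$ into $\lceil n/M\rceil$ alphabet intervals of size at most $M$ using a comparison-based multiple-selection routine (e.g.\ iterated Frederickson selection applied to the input), which runs in $\widetilde O(n)$ time and $O(S)$ space.

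Next I would process the $n$ windows in batches of $B$.  For a batch $[j,j+B-1]$, allocate a $B$-entry array $F_k[\cdot]$ of running sums initialised to $0$, and then for each alphabet interval $I_\ell$ do the following.  (a) Scan $W_j$ once, inserting every $v\in W_j\cap I_\ell$ into a balanced BST while maintaining $f_v$, and simultaneously maintain $c:=\sum_{v\in I_\ell}f_v^k$ incrementally using the update rule $(f+1)^k-f^k$ on each increment.  (b) Slide the window $B-1$ times: at each slide, if the departing element $x_{i-1}$ or the arriving element $x_{i+n-1}$ lies in $I_\ell$, perform one BST update and the corresponding $O(1)$ correction to $c$; then add $c$ to $F_k[i]$.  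After all $n/M$ intervals have contributed, output the $B$ entries of $F_k[\cdot]$.  This keeps the working memory inside $O(S)$ bits at all times: the BST uses $O(M\log n)=O(S)$ bits, the output buffer uses $O(B\log n)=O(S)$ bits, and the pivot list for the alphabet partition can be streamed into memory in chunks.

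Per (batch, interval) pair the dominant cost is the initialisation scan of $W_j$, which takes $O(n\log M)$ comparisons; sliding through the batch adds $O(B\log M)$.  Multiplied over $(n/B)\cdot(n/M)$ such pairs this gives total time $\widetilde O(n^3/(BM))$, so with $B,M=\Theta(S/\log n)$ we obtain the target $T\cdot S\in\widetilde O(n^2)$.  The \textbf{main obstacle} is the initialisation term: a naive reading gives $\Theta(n)$ time per pair, and the challenge is to carry out the deterministic comparison-based pivot choice so that the pivots can be re-used across all batches (making the partition a one-time cost), and to argue that the BST operations do not contribute more than the stated $\log^2 n$ overhead.  A secondary subtlety is the small-$S$ regime where $B,M$ become small: there the algorithm degenerates into maintaining $F_k$ incrementally and answering per-slide frequency queries by batched scans of the input, which one verifies separately also fits the $O(n^2\log^2 n)$ product.
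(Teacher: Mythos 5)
There is a genuine quantitative gap in your time analysis, and it is fatal for all but the largest space bounds. Your own accounting gives total time $\widetilde O(n^3/(BM))$: each of the $n/B$ batches requires a fresh $\Theta(n)$-time scan of the window for each of the $n/M$ alphabet intervals (even if the pivots are computed once, every element of $W_j$ must still be tested against the current interval on every pass, so the per-pair scan cost cannot be amortized away). With $B=M=\Theta(S/\log n)$ this is $T\in\widetilde O(n^3/S^2)$, hence $T\cdot S\in\widetilde O(n^3/S)$, which matches the target $n^2\,\mathrm{polylog}(n)$ only when $S=\Theta(n)$. At the other end, $S=\Theta(\log n)$, your bound is $\widetilde O(n^3)$, a factor of nearly $n$ too large. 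The obstacle you flag is real, but the remedy you suggest (reusing pivots across batches) does not touch it, because the pivot-selection cost was never the bottleneck; the repeated window scans are.

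The root cause is the decomposition: you recompute each window's $F_k$ from scratch as a sum of per-interval contributions, which forces you to touch all $n$ frequencies (spread over $n/M$ passes) for every batch. The paper's proof avoids this entirely by never partitioning the alphabet. It computes $y_1$ once using the Pagter--Rauhe space-$S$ comparison-based sorting algorithm in $O(n^2/S)$ time, and thereafter uses the incremental identity relating $y_{j+1}$ to $y_j$, which depends only on the frequencies, within the current window, of the departing symbol $x_j$ and the arriving symbol $x_{j+n}$. For a batch of $S'=S/\log n$ consecutive outputs there are only $2S'$ such boundary symbols; a single $O(n\log S')$-time scan of the window against a BST holding just those $2S'$ symbols supplies all the needed counts. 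This yields $O(n\log S)$ per batch, $O(n^2\log n\log S/S)$ overall, and hence $T\cdot S\in O(n^2\log^2 n)$ for the entire range $\log n\le S\le n$. To salvage your write-up you would need to replace the alphabet-partition inner loop with this boundary-symbol idea (or something equivalent that reduces each batch to one pass over the window), and supply a method for the first output such as the space-bounded sorting subroutine.
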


\begin{proof}
We denote the $i$-th output by $y_i=F_k(x_i,\ldots, x_{i+n-1})$.
We first compute $y_1$ using the
comparison-based time $O(n^2/S)$ sorting
algorithm of Pagter and Rauhe~\cite{pr:comparison-sorting}.
This algorithm produces the list of outputs in order by building a space $S$
data structure $D$ over the $n$ inputs and then repeatedly removing and
returning the index of the smallest element from that structure using a {\sc POP} operation.
We perform {\sc POP} operations on $D$ and keep track of the last index popped.
We also will maintain the index $i$ of the previous symbol seen as well as a
counter that tells us the number of times the symbol has been seen so far.
When a new index $j$ is popped, we compare the symbol at that index with the
symbol at the saved index.
If they are equal, the counter is incremented.
Otherwise, we save the new index $j$, update the running total for $F_k$
using the $k$-th power of the counter just computed, and then reset that
counter to 1.

Let $S'=S/\log_2{n}$.
We compute the remaining outputs in $n/S'$ groups of $S'$ outputs at a time.
In particular, suppose that we have already computed $y_i$.
We compute $y_{i+1},\ldots,y_{i+S'}$ as follows:

We first build a single binary search tree for both $x_i,\ldots, x_{i+S'-1}$
and for $x_{i+n},\ldots, x_{i+n+S'-1}$ and include a pointer $p(j)$ from each
index $j$ to the leaf node it is associated with.
We call the elements $x_i,\ldots, x_{i+S'-1}$ the old elements and add them
starting from $x_{i+S'-1}$.
While doing so we maintain a counter $c_j$ for each index $j\in [i,i+S'-1]$ of
the number of times that $x_j$ appears to its right in $x_i,\ldots,x_{i+S'-1}$.
We do the same for $x_{i+n},\ldots, x_{i+n+S'-1}$, which we call the
new elements, but starting from the left.
For both sets of symbols, we also add the list of indices where each element
occurs to the relevant leaf in the binary search tree.

We then scan the $n-S'$ elements $x_{i+S'},\ldots, x_{i+n-1}$ and maintain
a counter $C(\ell)$ at each leaf $\ell$ of each tree to record the number of
times that the element has appeared.

For $j\in [i,i+S'-1]$ we produce $y_{j+1}$ from $y_j$.
If $x_j = x_{j+n}$ then $y_{j+1} = y_j$.
Otherwise, we can use the number of times the old symbol $x_j$ and the
new symbol $x_{j+n}$ occur in the window $x_{j+1},\dots,x_{j+n-1}$ to
give us $y_{j+1}$.
To compute the number of times $x_j$ occurs in the window, we look at
the current head pointer in the new element list associated with leaf $p(j)$
of the binary search tree.
Repeatedly move that pointer to the right if the next position in the list
of that position is at most $n+j-1$.
Call the new head position index $\ell$.
The number of occurrences of $x_j$  in $x_{j+1}, \ldots, x_{S'}$ and
$x_{n+1}, \ldots, x_{n+j}$ is now $c_j+c_\ell$.
The head pointer never moves backwards and so the total number of pointer
moves will be bounded by the number of new elements.
We can similarly compute the number of times  $x_{j+n}$ occurs in the
window by looking at the current head pointer in the old element list
associated with $p(j+n)$ and moving the pointer to the left until it is
at position no less than $j+1$.
Call the new head position in the old element list $\ell'$.

Finally, for $k>0$ we can output $y_{j+1}$ by subtracting
$(1+c_j+c_\ell +C(p(j)))^k - (c_j+c_\ell +C(p(j))^k$ from $y_j$ and adding
$(1+c_{j+n}+c_{\ell'} +C(p(j+n)))^k - (c_{j+n}+c_{\ell'} +C(p(j+n))^k$.
When $k=0$ we compute $y_{j+1}$ by subtracting the value of the indicator
$\textbf{1}_{c_j+c_\ell +C(p(j)) =0}$ from $y_j$ and adding $\textbf{1}_{c_{j+n}+c_{\ell'} +C(p(j+n)) =0 }$.

The total storage required for the search trees and pointers is $O(S'\log n)$
which is $O(S)$.
The total time to compute $y_{i+1},\ldots, y_{i+S'}$ is dominated by the
$n-S'$ increments of counters using the binary search tree, which is
$O(n\log S')$ and hence $O(n\log S)$ time. This computation must be done $(n-1)/S'$
times for a total of $O(\frac{n^2\log S}{S'})$ time.
Since $S'=S/\log n$, the total time including that to compute $y_1$ is
$O(\frac{n^2 \log n\log S}{S})$ and hence $T\cdot S\in O(n^2\log^2 n)$.
\end{proof}

\section{Element Distinctness is easier than $F_0\bmod 2$}\label{sec:ed}

In this section we investigate the complexity of  $ED^{\boxplus n}$ and show
that it is strictly easier than $(F_0\bmod 2)^{\boxplus n}$.  
This fact is established by giving a particularly simple errorless algorithm for
$ED^{\boxplus n}$ which runs in linear time on average on random inputs
with alphabet $[n]$ and hence also beats our strong average-case lower bound
for $(F_0\bmod 2)^{\boxplus n}$.

We also give a deterministic reduction showing that the complexity of
$ED^{\boxplus n}$ is very similar to that of $ED$.  In particular,
$ED^{\boxplus n}$ can be computed
with at most an $O(\log^2 n)$ additive increase in the space and $O(\log^2 n)$
multiplicative increase in time more than required to compute a single
instance of $ED$.
This shows that any
deterministic or randomized algorithm for $ED$ that satisfies
$T\cdot S\in o(n^2/\log^3 n)$ would provide a worst-case separation between
the complexities of $ED^{\boxplus n}$ and $(F_0\bmod 2)^{\boxplus n}$.

\subsection{A fast average case algorithm for $ED^{\boxplus n}$ with
alphabet $[n]$}

We show a simple average case $0$-error
sliding-window algorithm for $ED^{\boxplus n}$.
When the input alphabet is chosen uniformly at random from $[n]$, the
algorithm runs in $O(n)$ time on average using $O(\log{n})$ bits of space.
By way of contrast, in Section~\ref{sec:fk} we proved an average case
time-space lower bound of $\overline T\cdot \overline S \in \Omega(n^{2})$ for
$(F_0\bmod 2)^{\boxplus n}$ under the same distribution.


The method we employ is as follows.  We start at the first window of length
$n$ of the input and perform a search for the first duplicate pair starting
at the right-hand end of the window and going to the left.
We check if a symbol at position $j$ is involved in a duplicate by simply
scanning all the symbols to the right of position $j$ within the window.
If the algorithm finds a duplicate in a suffix of length $x$, it shifts the
window to the right by $n-x+1$ and repeats the procedure from this point.
If it does not find a duplicate at all in the whole window, it simply moves
the window on by one and starts again.

In order to establish the running time of this simple method,
we will make use of the following birthday-problem-related facts which we prove in Appendix~\ref{sec-facts}.

\begin{lemma}\label{lemma:birthday}
Assume that we sample i.u.d.\@ with replacement from the range
$\{1\dots n\}$ with $n \geq 4$.   Let $X$ be a discrete random variable that represents the number of samples taken when the first duplicate is found.
Then
\begin{equation}
\Pr\left(X \geq n/2\right) \leq e^{-\frac{n}{16}}. \label{birthday:0}
\end{equation}
We also have that
\begin{equation}
\mathbb{E}(X^2) \leq 4n. \label{birthday:2}
\end{equation}
\end{lemma}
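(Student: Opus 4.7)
The plan is to derive both parts from a single exponential tail bound on $X$. Since $X \geq k$ is equivalent to the first $k$ samples being pairwise distinct, conditioning on each successive sample avoiding the previously drawn values (the $i$-th sample avoids the earlier $i-1$ with probability $1 - (i-1)/n$) gives
$$\Pr(X \geq k) \;=\; \prod_{i=0}^{k-1}\Bigl(1 - \tfrac{i}{n}\Bigr) \;\leq\; \exp\!\Bigl(-\tfrac{k(k-1)}{2n}\Bigr),$$
by termwise application of $1 - x \leq e^{-x}$. Everything else follows from this one estimate.

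For part~(\ref{birthday:0}), I would plug $k = \lceil n/2 \rceil$ into this tail bound; for $n$ even this produces an exponent of $(n/2)(n/2 - 1)/(2n) = (n-2)/8$, and the elementary inequality $(n-2)/8 \geq n/16$ is equivalent to $n \geq 4$, matching the lemma's hypothesis. The odd case uses $k = (n+1)/2$ and yields a strictly better exponent, so the same bound holds.

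For part~(\ref{birthday:2}), I would expand $\mathbb{E}[X^2] = \sum_{k \geq 1}(2k - 1)\Pr(X \geq k)$ and plug in the tail estimate. The convenient feature here is that the real-valued function $g(x) = (2x - 1)\exp(-x(x-1)/(2n))$ satisfies $g(x) = -2n \cdot \tfrac{d}{dx}\exp(-x(x-1)/(2n))$, so $\int_0^\infty g(x)\,dx$ evaluates in closed form to $2n$. I plan to split the sum $\sum_{k \geq 1}(2k-1)\exp(-k(k-1)/(2n))$ at the mode $k^* \approx \sqrt{n}$ of $g$: for $k \leq k^*$ use the trivial bound $\Pr(X \geq k) \leq 1$, contributing $O(n)$; for $k > k^*$ compare to the tail integral, which is also $O(n)$; and absorb the discrete-to-continuous discrepancy in a correction of order $g(k^*) = O(\sqrt{n})$. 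The resulting bound $\mathbb{E}[X^2] \leq 2n + O(\sqrt{n})$ lies below $4n$ for all but the smallest values of $n$; those remaining cases are verified by direct computation of the finite sum, using that $X \leq n+1$.

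The main obstacle is pinning down the constant $4$ uniformly in $n$: the natural Gaussian-integral comparison yields leading coefficient $2$, leaving asymptotic slack, but obtaining a clean bound for all $n \geq 4$ requires care in handling the discrete-continuous gap and a brief small-$n$ check at the boundary.
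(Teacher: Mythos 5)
Your tail bound and your treatment of \eqref{birthday:0} are exactly the paper's: both write $\Pr(X\ge k)=\prod_{i<k}(1-i/n)$, apply $1-t\le e^{-t}$, and plug in $k=n/2$ (the paper first weakens the exponent $k(k-1)/(2n)$ to $k^2/(4n)$, valid for $k\ge 2$, which gives $n/16$ immediately; your route via $(n-2)/8\ge n/16$ for $n\ge 4$ is equivalent). For \eqref{birthday:2} you take a genuinely different and more laborious route. The paper writes $\mathbb{E}(X^2)=\sum_{x\ge 1}\Pr(X^2\ge x)=\sum_{x\ge 1}\Pr(X\ge\sqrt{x})$ and then uses the weakened tail $e^{-(\sqrt{x})^2/(4n)}=e^{-x/(4n)}$: the summand becomes a \emph{monotone decreasing} function of $x$, so the sum is at most $\int_0^\infty e^{-x/(4n)}\,dx=4n$ with no further work. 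Your grouped form $\sum_k(2k-1)\Pr(X\ge k)$ is the same identity, but by keeping the sharp exponent $k(k-1)/(2n)$ you end up with a unimodal summand and must split at the mode and control the sum-integral discrepancy. This does work: the standard unimodal comparison gives $\sum_k g(k)\le\int_0^\infty g+2\max g\le 2n+4\sqrt{n}$, and $2n+4\sqrt{n}\le 4n$ holds precisely when $n\ge 4$, so in fact no separate small-$n$ computation is needed at all under the lemma's hypothesis. The trade-off: your argument shows the true constant is $2+o(1)$ rather than $4$, while the paper's deliberate weakening of the exponent buys a one-line monotone comparison at the cost of a looser constant. If you adopt your route, replace the vague ``$O(\sqrt{n})$ correction'' and ``verified by direct computation'' with the explicit $2\max_x g(x)\le 4\sqrt{n}$ bound so the constant chase closes cleanly.
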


We can now show the running time of our average case algorithm for
$ED^{\boxplus n}$.

\begin{theorem}
Assume that the input is sampled  i.u.d.\@ with replacement from alphabet $[n]$.
$ED^{\boxplus n}$ can be solved in  $T \in O(n)$ time on average and space
$S \in O(\log{n})$ bits.
\end{theorem}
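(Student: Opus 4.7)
The plan is: correctness and space are immediate, and the expected-time analysis reduces to applying Lemma~\ref{lemma:birthday} to each scan step. For correctness, locating a duplicate pair in a suffix of length $x$ of the current window certifies $ED=0$ for every window whose left endpoint lies in the next $n-x$ positions (they all still contain both members of the pair), and shifting by $n-x+1$ excludes the left element of the pair and lets the algorithm restart its scan correctly. The space is $O(\log n)$ bits because the algorithm only stores a constant number of pointers and counters over positions in $[1,2n-1]$.

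I would model the execution as $N$ processing steps. In step $i$ the window begins at some position $a_i$, the right-to-left scan performs $W_i = O(X_i^2)$ comparisons (each of the $X_i$ scanned positions is compared against the already-scanned positions to its right), where $X_i$ denotes the suffix length at which the first duplicate is detected and $X_i := n$ if no duplicate is present; the window then advances by $\sigma_i := n - X_i + 1$, which also reproduces the unit shift in the no-duplicate case. Since the algorithm produces exactly $n$ outputs and each step produces at least one, $N \le n$ deterministically.

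The crucial probabilistic fact is that at step $i$ the scan first encounters the $F_i := n - X_{i-1} + 1$ (with $F_1 := n$) rightmost positions of the new window, which have never been queried before and are therefore i.i.d.\ uniform on $[n]$ independently of the entire history. When the first duplicate falls inside this fresh portion, $X_i$ has the distribution of the birthday random variable $X$ of Lemma~\ref{lemma:birthday} truncated at $F_i$; the overlap region consists of (previously certified) pairwise distinct values, so if the scan proceeds into it the conditional collision probabilities agree with a pure birthday process up to $1+o(1)$ factors. Consequently, uniformly in $i$, $\Pr[X_i \ge n/2] \le 2e^{-n/16}$ and $\mathbb{E}[X_i^2] \le 8n$.

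I would then split on the event $G = \{X_i < n/2 \text{ for every } i \le N\}$. On $G$ each $\sigma_i > n/2$, and since the shifts must cover all $n$ windows, $N \le 2$; the contribution of $G$ to $\mathbb{E}[T]$ is at most $\mathbb{E}[(W_1+W_2)\mathbf{1}_G] \le C\bigl(\mathbb{E}[X_1^2]+\mathbb{E}[X_2^2]\bigr) = O(n)$. The bad event $G^c$ has probability at most $\sum_{i\le n}\Pr[X_i \ge n/2] \le 2n\,e^{-n/16}$ by a union bound; combined with the deterministic bound $T \le N \cdot Cn^2 \le Cn^3$, it contributes only $O(n^4 e^{-n/16}) = o(1)$ to $\mathbb{E}[T]$. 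Summing the two cases yields $\mathbb{E}[T] \in O(n)$. The main technical obstacle is the uniform-in-$i$ birthday-style bound on $X_i$ in the presence of window overlap; I would handle this by conditioning on the overlap contents and coupling the fresh samples in our process with the samples of a pure birthday process, then verifying that the deviation in conditional no-collision probabilities stays within a constant factor throughout the regime $t \le n/2$ of interest.
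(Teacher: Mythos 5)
Your proposal is correct and follows essentially the same route as the paper: bound the expected scan cost of a step by $O(\mathbb{E}[X^2])=O(n)$ via the birthday lemma, observe that on the event that every scan stops before $n/2$ positions at most two steps occur (since each shift then exceeds $n/2$), and absorb the complementary rare event with a union bound against the crude polynomial worst-case running time. The only presentational difference is that the paper dispatches the overlap-dependence issue you flag as the main technical obstacle with a one-line monotonicity remark (the comparison count does not increase when some window elements are already known to be distinct) rather than an explicit conditioning-and-coupling argument, but the substance of the analysis is the same.
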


\begin{proof}
Let $U$ be a sequence of values sampled uniformly from $[n]$ with $n\geq 4$.
Let $M$ be the index of the first duplicate in $U$ found when scanning from
the right and let $X=n-M$.
Let $W(X)$ be the number of comparisons required to find $X$.
Using our naive duplicate finding method we have that $W(X) \leq X(X+1)/2$.
It also follows from inequality~\eqref{birthday:2} that $\mathbb{E}(W) \leq 4n$.

Let $R(n)$ be the total running time of our algorithm and note that
$R(n) \leq n^3/2$.
Furthermore the residual running time at any intermediate stage of the
algorithm is at most $R(n)$.

Let us consider the first window and let $M_1$ be the index of the first
duplicate from the right and let $X_1 = n - M_1$.
If $X_1 \geq n/2$, denote the residual running time by $R^{(1)}$.
We know from~\eqref{birthday:0} that
$\Pr(X_1 \geq n/2) \leq e^{-\frac{n}{16}}$.
If $X_1 < n/2$, shift the window to the right by $M_1+1$ and find $X_2$ for
this new window.
If $X_2 \geq n/2$, denote the residual running time by $R^{(2)}$.
We know that $\Pr(X_2 \geq n/2) \leq e^{-\frac{n}{16}}$.
If $X_1 < n/2 $ and $X_2 < n/2$ then the algorithm will terminate,
outputting `not all distinct' for every window.

The expected running time is then
\begin{equation*}\begin{split}
\mathbb{E}(R(n)) &= E\left(W(X_1)\right) + E\left(R^{(1)}\right)\Pr\left(X_1 \geq \frac{n}{2}\right)\\
&\quad+\Pr\left(X_1 < \frac{n}{2}\right)\Bigl[E\left(W(X_2) \middle\vert X_1 < \frac{n}{2}\right) + E\left(R^{(2)}\right) \Pr\left(X_2 \geq \frac{n}{2} \middle\vert X_1 < \frac{n}{2}\right)\Bigr]\\
&\leq 4n + \frac{n^3}{2} e^{-\frac{n}{16}} + 4n + \frac{n^3}{2} e^{-\frac{n}{16}} \in O(n)
\end{split}\end{equation*}

The inequality follows from the followings three observations.
We know trivially that $\Pr(X_1 < n/2) \leq 1$.
Second, the number of comparisons $W(X_2)$ does not increase if some of
the elements in a window are known to be unique. Third,
$\Pr(X_2 \geq n/2 \land X_1 < n/2) \leq \Pr(X_2 \geq n/2) \leq e^{-\frac{n}{16}}$.
\end{proof}

We note that similar results can be shown for inputs uniformly chosen from the
alphabet $[cn]$ for any constant $c$.

\subsection{Sliding windows do not significantly increase the complexity of
element distinctness}\label{sec:slidingspacesaving}

As preparation for the main results of this section,
we first give a deterministic reduction which shows how the answer to
an element distinctness problem allows one to reduce the input size of
sliding-window algorithms for computing $ED_n^{\boxplus m}$.

\begin{lemma}
\label{windowED-size-reduction}
Let $n>m>0$.
\begin{enumerate}[(a)]
\item If $ED_{n-m+1}(x_m,\ldots, x_n)=0$ then
$ED_n^{\boxplus m}(x_1,\ldots, x_{n+m-1})=0^m$.
\item If $ED_{n-m+1}(x_m,\ldots, x_n)=1$ then define
\begin{enumerate}[i.]
\item $i_L=\max\{j\in [m-1]\mid ED_{n-j+1}(x_j,\ldots, x_n)=0\}$ where
$i_L=0$ if the set is empty and
\item $i_R=\min\{j\in [m-1]\mid ED_{n-m+j}(x_m,\ldots, x_{n+j})=0\}$ where $i_R=m$
if the set is empty.
\end{enumerate}
Then
$$ED_n^{\boxplus m}(x_1,\ldots, x_{n+m-1})=0^{i_L} 1^{m-i_L}\ \land\ 1^{i_R} 0^{m-i_R}\ \land\ ED_{m-1}^{\boxplus m}(x_1,\ldots,x_{m-1},x_{n+1},\ldots, x_{n+m-1})$$
where each $\land$ represents bit-wise conjunction.
\end{enumerate}
\end{lemma}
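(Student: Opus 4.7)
The plan is to prove the two parts separately; part (a) is immediate, and part (b) follows from a clean decomposition of each window into a ``common'' middle and two ``extra'' ends.

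For part (a), any duplicate inside $(x_m,\ldots,x_n)$ lies in every window $(x_j,\ldots,x_{j+n-1})$ for $j\in[m]$, because $j\le m\le n\le j+n-1$ forces the window to cover every index in $\{m,\ldots,n\}$. Hence $ED_n$ returns $0$ on every window and $ED_n^{\boxplus m}(x)=0^m$.

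For part (b), I would fix $j\in[m]$ and decompose the $j$-th window as $L_j\cdot C\cdot R_j$, where $C=(x_m,\ldots,x_n)$ is all-distinct by hypothesis, $L_j=(x_j,\ldots,x_{m-1})$ is the left ``extra'', and $R_j=(x_{n+1},\ldots,x_{n+j-1})$ is the right ``extra'' (so $L_m$ and $R_1$ are empty). The first key step is to show that the window is all-distinct iff all three of (A) $L_j\cdot C$, (B) $C\cdot R_j$, and (C) $L_j\cdot R_j$ are all-distinct. The forward direction is immediate since each is a subsequence. For the converse, any pair of positions in the window lies inside at least one of the three concatenations: a pair missed by both (A) and (B) must have one element in $L_j$ and one in $R_j$, and such a pair is captured by (C). So the absence of duplicates in all three forces the whole window to be duplicate-free.

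The remaining step is to identify each of (A), (B), (C) with the $j$-th bit of the corresponding string in the claimed conjunction. For (A), the sequences $(x_j,\ldots,x_n)$ are nested in $j$: if one has a duplicate, so does every one starting at a smaller index. Hence the bad set $\{j\in[m-1]:ED_{n-j+1}(x_j,\ldots,x_n)=0\}$ is a downward-closed prefix $\{1,\ldots,i_L\}$ of $[m-1]$, and (A) holds iff $j>i_L$, matching $0^{i_L}1^{m-i_L}$. The analogous monotonicity on the right makes the set of $j$ for which $(x_m,\ldots,x_{n+j-1})$ has a duplicate upward-closed, which (together with the degenerate case $j=1$, where $R_1$ is empty) gives (B) iff $j\le i_R$, matching $1^{i_R}0^{m-i_R}$. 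Finally, direct inspection of the concatenation $(x_1,\ldots,x_{m-1},x_{n+1},\ldots,x_{n+m-1})$ shows that its $j$-th length-$(m-1)$ window is exactly $L_j\cdot R_j$, so (C) is the $j$-th output of $ED_{m-1}^{\boxplus m}$ on this concatenation. Taking the bitwise conjunction over $j\in[m]$ yields the lemma. The only nontrivial bookkeeping is in the boundary cases $i_L=0$, $i_R=m$, $j=1$, and $j=m$, but once the empty-sequence conventions are respected, the equivalence carries through uniformly.
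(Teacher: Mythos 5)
Your proof is correct and follows essentially the same route as the paper's: the all-windows-share-the-middle observation for part (a), and for part (b) the decomposition of each window into the common middle $C$ plus left/right extras, with nestedness giving the prefix/suffix structure of $i_L$ and $i_R$ and the $L_j\cdot R_j$ concatenations matching the windows of the smaller $ED_{m-1}^{\boxplus m}$ instance. If anything, your explicit three-part equivalence (window distinct iff $L_j\cdot C$, $C\cdot R_j$, and $L_j\cdot R_j$ are each distinct) is a slightly tighter case analysis than the paper's prose, which glosses over duplicates internal to $L$ or $R$ alone, but the argument is the same in substance.
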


\begin{proof}
The elements $M=(x_m,\ldots, x_n)$ appear in all $m$ of the windows so if this
sequence contains duplicated elements, so do all of the windows and hence the
output for all windows is $0$.  This implies part (a).

If $M$ does not contain any duplicates then any duplicate
in a window must involve at least one element from $L=(x_1,\ldots, x_{m-1})$ or
from $R=(x_{n+1},\ldots, x_{n+m-1})$.
If a window has value 0 because it contains an element of $L$ that also appears in $M$, it must also contain the rightmost such element of $L$ and hence any
window that is distinct must begin to the right of this rightmost such element
of $L$.
Similarly, if a window has value 0 because it contains an element of $R$ that
also appears in $M$, it must also contain the leftmost such element of $L$ and
hence any window that is distinct must end to the left of this leftmost such
element of $R$.
The only remaining duplicates that can occur in a window can only involve
elements of both $L$ and $R$.
In order, the $m$ windows contain the following sequences of elements of $L\cup R$:
$(x_1,\ldots, x_{m-1})$, $(x_2,\ldots, x_{m-1}, x_{n+1})$, $\ldots$, $(x_{m-1},x_{n+1},\ldots, x_{n+m-2})$, $(x_{n+1},\ldots,x_{n+m-1})$.
These are precisely the sequences for which $ED_{m-1}^{\boxplus m}(x_1,\ldots,x_{m-1},x_{n+1},\ldots,x_{n+m-1})$ determines distinctness.   Hence
part (b) follows.
\end{proof}

We use the above reduction in input size to show that any efficient algorithm
for element distinctness can be extended to solve element distinctness over
sliding windows at a small additional cost.

\begin{theorem}
\label{windowED}
If there is an algorithm $A$ that solve element distinctness, $ED$, using
time at most $T(n)$ and
space at most $S(n)$, where $T$ and $S$ are nondecreasing functions of $n$,
then there is an algorithm $A^*$ that solves the sliding-window
version of element distinctness, $ED_n^{\boxplus n}$, in time $T^*(n)$ that is
$O(T(n)\log^2 n)$ and space $S^*(n)$ that is $O(S(n)+\log^2 n)$.  Moreover,
if $T(n)$ is $O(n^\beta)$ for $\beta>1$, then $T^*(n)$ is $O(n^\beta\log n)$.

If $A$ is deterministic then so is $A^*$.   If $A$ is randomized with error
at most $\epsilon$ then $A^*$ is randomized with error $o(1/n)$.  Moreover,
if $A$ has the obvious 1-sided error (it only reports that inputs are not
distinct if it is certain of the fact) then the same property holds for $A^*$.
\end{theorem}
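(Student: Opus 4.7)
The plan is to apply Lemma~\ref{windowED-size-reduction} recursively, inside a divide-and-conquer on the number of output windows. To solve $ED_n^{\boxplus n}$, I would first split the $n$ outputs into two halves, each of which is an $ED_n^{\boxplus n/2}$ instance on a length-$(3n/2-1)$ subinput; since $n>n/2$, the lemma's hypothesis applies to each half. For each half I would invoke $A$ on the middle segment of length $n/2+1$ and short-circuit with $n/2$ zeros if it reports ``not distinct''; otherwise I would locate $i_L$ and $i_R$ by binary search, each using $O(\log n)$ calls to $A$ on inputs of length at most $n$ (monotonicity is immediate, since adding elements to a sequence containing a duplicate preserves the duplicate). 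By the lemma, the remaining $n/2$ outputs of the half are the bitwise AND of two constant patterns determined by $i_L$ and $i_R$ with the outputs of a sliding-window subproblem of input length $n-2$, which I would solve by recursing on the same scheme.

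This will give the recurrence $T^*(n) \le 2\,T^*(n/2) + O(T(n)\log n)$, which by the master theorem yields $T^*(n) \in O(T(n)\log^2 n)$ in general and $T^*(n) \in O(T(n)\log n)$ when $T(n) \in \Omega(n^{1+\delta})$ (Case~3, with the easy regularity check). For space, each recursion frame would store only $O(\log n)$ bits (window counts, $i_L$, $i_R$, offsets), the recursion depth is $O(\log n)$, and the $S(n)$-bit workspace of $A$ can be reused serially across its invocations, giving $S^*(n) \in S(n) + O(\log^2 n)$. Determinism carries through directly. For randomized $A$, the analogous recurrence $C^*(n) = 2\,C^*(n/2) + O(\log n)$ bounds the total number of $A$-invocations by $O(n)$, so I would amplify each call's error to $o(1/n^2)$ by $O(\log n)$ independent repetitions (majority vote in the two-sided case; AND of ``not distinct'' verdicts in the one-sided case, which preserves one-sidedness) and union-bound to get overall error $o(1/n)$.

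The hard part will be the space accounting: the $n/2$ output bits of the inner subproblem must not be buffered all at once. The fix is to implement the recursive call as a producer that emits its outputs one at a time, with the outer level consuming each bit, ANDing it with the pattern dictated by $i_L$ and $i_R$, and emitting the final bit before requesting the next. With this streaming structure only a single active frame per level of the depth-$O(\log n)$ recursion tree will be in memory at any moment, so the total bookkeeping overhead stays within $O(\log^2 n)$ bits.
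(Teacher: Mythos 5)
Your deterministic construction is essentially the paper's own proof: the paper computes the $n$ outputs in $n/m$ groups of $m$ via Lemma~\ref{windowED-size-reduction}, sets $m=n/2$, obtains the same recurrence $T^*(n)\le 2T^*(n/2)+O(T(n)\log n)$, and accounts for space exactly as you do ($O(\log n)$ bits of bookkeeping per level, depth $O(\log n)$, with $A$'s workspace reused serially). Your streaming producer/consumer remark and the monotonicity justification for the binary search are fine (and consistent with the write-only output convention), so the deterministic and one-sided-error claims go through.

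The gap is in your randomized time bound. You amplify \emph{each} comparison of the binary search to error $o(1/n^2)$ with $O(\log n)$ repetitions of $A$; since each binary search already makes $O(\log n)$ comparisons, a single subproblem then costs $O(T(n)\log^2 n)$ just for its searches, and the recurrence becomes $T^*(n)\le 2T^*(n/2)+O(T(n)\log^2 n)$, which solves to $O(T(n)\log^3 n)$ when $T(n)=\Theta(n)$ --- a $\log n$ factor worse than the claimed $O(T(n)\log^2 n)$. (Your invocation count $C^*(n)=O(n)$ counts unamplified calls; after amplification the searches dominate.) The paper avoids this by amplifying only the single call on the middle segment and then running the \emph{noisy binary search} of Feige, Peleg, Raghavan, and Upfal, which locates $i_L$ and $i_R$ with error $o(1/n^2)$ using only $O(\log n)$ total calls to the $\epsilon$-error $A$, so the per-subproblem cost stays at $O(T(n)\log n)$ and the deterministic time bound is preserved. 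You need this (or an equivalent trick, e.g.\ exploiting the one-sided-error structure to avoid repeating every comparison) to meet the stated bound in the randomized case.
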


\begin{proof}
We first assume that $A$ is deterministic.
Algorithm $A^*$ will compute the $n$ outputs of $ED_n^{\boxplus n}$ in $n/m$
groups of $m$ using
the input size reduction method from Lemma~\ref{windowED-size-reduction}.
In particular, for each group $A^*$ will first call $A$ on the middle section
of input size $n-m+1$ and output $0^m$
if $A$ returns 0.   Otherwise, $A^*$ will do two binary searches involving
at most $2\log m$ calls to $A$ on inputs of size at most $n$ to compute $i_L$
and $i_R$ as defined in part (b) of that lemma.
Finally, in each group, $A^*$ will make
one recursive call to $A^*$ on a problem of size $m$.

It is easy to see that this yields a recurrence of the form
$$T^*(n)=(n/m)[cT(n) \log m + T^*(m)].$$
In particular, if we choose $m=n/2$ then we obtain
$T^*(n)\le 2 T^*(n/2)+ 2c T(n)\log n$.   If $T(n)$ is $O(n^\beta)$ for $\beta>1$
this solves to $T^*(n)=O(n^\beta\log n)$.  Otherwise, it is immediate from
the definition of $T(n)$ that $T(n)$ must be $\Omega(n)$ and hence the recursion
for $A^*$ has $O(\log n)$ levels and the total cost associated with each
of the levels of the recursion is $O(T(n)\log n)$.

Observe that the space for all the calls to $A$ can be re-used in the recursion.
Also note that the algorithm $A^*$ only needs to remember a constant number
of pointers for each level of recursion for a total cost of $O(\log^2 n)$
additional bits.

We now suppose that the algorithm $A$ is randomized with error at most
$\epsilon$.
For the recursion based on Lemma~\ref{windowED-size-reduction}, we use
algorithm $A$ and run it $C=O(\log n)$ times on input $(x_m,\ldots,x_n)$,
taking the majority of the answers to reduce the error to $o(1/n^2)$.
In case that no duplicate is found in these calls, we then apply the noisy
binary search method of Feige, Peleg, Raghavan, and
Upfal~\cite{fpru:noisy-decision} to
determine $i_L$ and $i_R$ with error at most $o(1/n^2)$ by using only
$C=O(\log n)$ calls to $A$.   (If the
original problem size is $n$ we will use the same fixed number $C=O(\log n)$
of calls to $A$ even at deeper levels of the recursion so that each subproblem
has error $o(1/n^2)$.)
There are only $O(n)$ subproblems so the final error is $o(1/n)$.
The rest of the run-time analysis is the same as in the deterministic case.

If $A$ has only has false positives (if it claims that the input is not
distinct then it is certain that there is a duplicate) then observe that
$A^*$ will only have false positives.
\end{proof}

\section{Order Statistics in Sliding Windows}\label{sec:order}

We first show that when order statistics are extreme, their complexity over
sliding windows does not significantly increase over that of a single
instance.

\begin{theorem}
\label{maxUBthm}
There is a deterministic comparison algorithm that computes
$MAX_n^{\boxplus n}$ (equivalently $MIN_n^{\boxplus n}$) using
time $T\in O(n\log n)$ and space $S\in O(\log n)$.
\end{theorem}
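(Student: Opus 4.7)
The plan is to reduce sliding-window maximum on input $x_1, \ldots, x_{2n-1}$ to one binary search plus two linear scans. Split the input at the midpoint: let $L = (x_1, \ldots, x_n)$ and $R = (x_{n+1}, \ldots, x_{2n-1})$. For $1 \leq i \leq n$, the window $W_i = (x_i, \ldots, x_{i+n-1})$ decomposes into the suffix $(x_i, \ldots, x_n)$ of $L$ and the prefix $(x_{n+1}, \ldots, x_{n+i-1})$ of $R$. Define $S_i = \max(x_i, \ldots, x_n)$ and $P_j = \max(x_{n+1}, \ldots, x_{n+j})$ with $P_0 = -\infty$. Then each output is $y_i = \max(S_i, P_{i-1})$.

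The crucial observation is that $S_i$ is non-increasing in $i$ while $P_{i-1}$ is non-decreasing in $i$, so $S_i - P_{i-1}$ changes sign at most once. Consequently there is a unique threshold $i^* \in \{0, 1, \ldots, n\}$ such that $y_i = S_i$ for $1 \leq i \leq i^*$ and $y_i = P_{i-1}$ for $i^* < i \leq n$. This collapses what looks like $n$ independent range-max queries into a single threshold search.

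Next, I would locate $i^*$ by binary search. At each probe $m$, compute $S_m$ by a right-to-left scan of $x_m, \ldots, x_n$ and $P_{m-1}$ by a left-to-right scan of $x_{n+1}, \ldots, x_{n+m-1}$, keeping only a running maximum and a counter; each probe uses $O(n)$ comparisons and $O(\log n)$ bits. A single comparison then decides whether $i^* \geq m$ or $i^* < m$. The $O(\log n)$ probes cost $O(n \log n)$ time and $O(\log n)$ space in total.

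Once $i^*$ is known, two linear sweeps emit all $n$ outputs. A right-to-left sweep of $L$ maintains $S_i = \max(x_i, S_{i+1})$ and, whenever $i \leq i^*$, writes $S_i$ to output position $i$. A left-to-right sweep of $R$ maintains $P_j = \max(x_{n+j}, P_{j-1})$ and, whenever $j+1 > i^*$, writes $P_j$ to output position $j+1$. Because the branching program model allows edges to be labelled with output assignments to any position, the two sweeps may emit outputs out of index order. Each sweep uses $O(n)$ time and $O(\log n)$ space, so the overall cost is $O(n \log n)$ time and $O(\log n)$ space. The MIN case is symmetric (swap $\max$ for $\min$ and reverse the inequality in the binary search). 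The one nontrivial step is identifying the $S$-versus-$P$ monotonicity; once that is in hand, everything else is routine bookkeeping within $O(\log n)$ bits.
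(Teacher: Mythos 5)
Your proof is correct, but it takes a genuinely different route from the paper's. The paper picks the middle window $(x_{\lceil n/2\rceil},\ldots,x_{n+\lceil n/2\rceil-1})$, finds its maximum $m$ naively, observes that $m$ lies in one half of that window, and then slides the window away from that half so that each of the first $n/2$ outputs costs a single comparison against the running maximum; it then recurses on the remaining region of length $n+\lceil n/2\rceil$, halving the number of outputs per level, for $O(\log n)$ levels of $O(n)$ work each. Your argument instead splits the \emph{input} at position $n$, writes $y_i=\max(S_i,P_{i-1})$ as the max of a suffix-maximum of the left half and a prefix-maximum of the right half, and exploits the opposite monotonicities of $S_i$ and $P_{i-1}$ to reduce everything to locating a single crossover index $i^*$ by binary search, after which two linear sweeps emit all outputs. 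This is essentially the classical prefix/suffix-max decomposition for sliding-window maximum, made to fit in $O(\log n)$ bits via the threshold observation rather than by storing the prefix and suffix maxima arrays. Both arguments give $T\in O(n\log n)$ and $S\in O(\log n)$; yours is non-recursive and isolates the one structural fact (the monotone crossover) more explicitly, while the paper's recursion avoids the binary search entirely and generalizes more directly to its ``output half the windows cheaply, recurse on the rest'' pattern. Two small points worth making explicit in your write-up: the ``running maximum'' should be stored as an index into the read-only input (so that the algorithm remains comparison-based and uses $O(\log n)$ bits regardless of the alphabet), and the tie-breaking in the definition of $i^*$ (say, $i^*=\max\{i: S_i>P_{i-1}\}$) should be fixed so that the prefix property you invoke actually holds; neither affects correctness.
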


\begin{proof}
Given an input $x$ of length $2n-1$, we consider the window of $n$
elements starting at position $ \lceil\frac{n}{2} \rceil$ and ending
at position $n+\lceil\frac{n}{2}\rceil-1$ and find the largest element
in this window naively in time $n$ and space $O(\log n)$; call it $m$.
Assume without loss of generality that $m$ occurs between positions
$ \lceil\frac{n}{2} \rceil$ and $n$, that is, the left half of the
window we just considered. Now we slide the window of length $n$ to
the left one position at a time. At each turn we just need to look
at the new symbol that is added to the window and compare it to $m$.
If it is larger than $m$ then set this as the new maximum for that
window and continue.

We now have all outputs for all windows that start in positions 1 to
$\lceil\frac{n}{2} \rceil$. For the remaining outputs, we now run
our algorithm recursively on the remaining
$n+\lceil\frac{n}{2}\rceil$-long region of the input.  We only need to maintain the left and right endpoints of the current region.
At each level in the recursion, the number of outputs is halved and
each level takes $O(n)$ time. Hence, the overall time complexity is
$O(n\log n)$ and the space is $O(\log n)$.
\end{proof}

In contrast when an order statistic is near the middle, such as the median,
we can derive a significant separation in complexity between the
sliding-window and a single instance.
This follows by a simple reduction and known
time-space tradeoff lower bounds for sorting~\cite{bc:sorting,bea:sorting}.

\begin{theorem}
Let $P$ be a branching program computing $O_t^{\boxplus n}$ in time
$T$ and space $S$ on an input of size $2n-1$, for any $t \in [n]$.
Then $T\cdot S \in \Omega(t^2)$ and the same bound applies to expected time
for randomized algorithms.
\end{theorem}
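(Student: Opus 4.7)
The plan is to reduce $t$-element sorting to computing $O_t^{\boxplus n}$, and then invoke the standard branching-program time-space tradeoff $T\cdot S\in\Omega(N^2)$ for sorting $N$ elements~\cite{bc:sorting,bea:sorting}.

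For the reduction, given $b_1,\ldots,b_t$ to be sorted, I pick sentinel values $M$ strictly larger than every $b_j$ and $m$ strictly smaller than every $b_j$, and define an input $x$ of length $2n-1$ by setting $x_j=M$ for $1\le j\le n-t$, $x_{n-t+j}=b_j$ for $1\le j\le t$, and $x_j=m$ for $n+1\le j\le 2n-1$. Assuming $t\le\lfloor(n+1)/2\rfloor$, a direct count shows that window $i$ for $1\le i\le t$ contains exactly $n-t-i+1$ copies of $M$, all $t$ of the $b_j$, and $i-1$ copies of $m$; after sorting, the $m$'s occupy positions $1,\ldots,i-1$, the sorted $b_j$'s fill positions $i,\ldots,i+t-1$, and the $M$'s fill the rest. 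Thus the $t^{th}$-smallest symbol in window $i$ is the $(t-i+1)^{th}$-smallest among the $b_j$, and the first $t$ outputs of $O_t^{\boxplus n}(x)$ deliver the $b_j$'s in reverse sorted order.

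Consequently any deterministic (resp.\ randomized) branching program of time $T$ and space $S$ computing $O_t^{\boxplus n}$ yields a deterministic (resp.\ randomized) sorting algorithm for $t$ elements running in time $O(T)$ and space $S+O(\log n)$, where the $O(\log n)$ overhead accounts for the simple counters used to generate the sentinel blocks and read off the first $t$ outputs. The sorting lower bound of~\cite{bc:sorting,bea:sorting}, together with its randomized expected-time analogue, then forces $T\cdot S\in\Omega(t^2)$ for $O_t^{\boxplus n}$.

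The main subtlety is that the construction uses $n-t$ large-sentinel slots, so it directly covers all $t$ windows only when $t\le n/2$; for $t>n/2$ I would instead apply the symmetric variant obtained by swapping the roles of $M$ and $m$, together with the identity that $O_t^{\boxplus n}$ and $O_{n-t+1}^{\boxplus n}$ have the same branching-program complexity under input negation. This handles the regime $t=\Theta(n)$ that is the paper's focus (in particular $t=\alpha n$ with $0<\alpha<1$ fixed), which is where the $\Omega(t^2)$ bound is nontrivial.
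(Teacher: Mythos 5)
Your proof is correct and essentially identical to the paper's: pad with $n-t$ large sentinels on the left and small sentinels on the right so that the first $t$ outputs of $O_t^{\boxplus n}$ read off the embedded $t$ elements in sorted order, then invoke the sorting time--space tradeoff of~\cite{bc:sorting,bea:sorting}. Your added care for the regime $t>n/2$ (via the $O_t\leftrightarrow O_{n-t+1}$ symmetry) addresses a point the paper silently ignores; as your own closing remark implies, the reduction really yields $\Omega(\min(t,n-t+1)^2)$, and the theorem as literally stated cannot hold at the extremes --- for $t=n$ the paper's own $O(n\log n)$-time, $O(\log n)$-space maximum algorithm contradicts $T\cdot S\in\Omega(n^2)$.
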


\begin{proof}
We give lower bound for $O_t^{\boxplus n}$ for $t \in[n]$ by showing a
reduction from sorting.
Given a sequence $s$ of $t$ elements to sort taking values in
$\{2, \ldots, n-1\}$, we create a $2n-1$ length string as follows: the first
$n-t$ symbols take the same value of $n$, the last $n-1$ symbols take
the same value of 1 and we embed the $t$ elements to sort in the
remaining $t$ positions, in an arbitrary order.
For the first window, $O_t$ is the maximum of the sequence $s$. As
we slide the window, we replace a symbol from the left, which has
value $n$, by a symbol from the right, which has value 1. The
$t^{th}$ smallest element of window $i = 1, \ldots,t $ is the
$i^{th}$ largest element in the sequence $s$. Then the first $t$
outputs of $O_t^{\boxplus n}$ are the $t$ elements of the sequence
$s$ output in increasing order.
The lower bound follows from~\cite{bc:sorting,bea:sorting}.
As with the bounds in~Corollary~\ref{average-and-random}, the proof methods
in~\cite{bc:sorting,bea:sorting} also immediately extend to average case and
randomized complexity.
\end{proof}

%

For the special case $t = \lceil
\frac{n}{2}\rceil$ (median), we note that the best 
lower bound known for the single-input version of the median problem is $T
\in \Omega(n \log \log_S n)$ derived in \cite{chan:selection-journal} for
$S\in \omega(\log n)$ and this is tight for the expected time of errorless
randomized algorithms.

\section*{Acknowledgements}
The authors would like to thank Aram Harrow for a number of insightful discussions and helpful comments during the preparation of this paper.
\bibliographystyle{plain}
\bibliography{theory,extra}

\appendix
\newpage

\section{Proof of Lemma~\ref{lemma:birthday}}\label{sec-facts}

Assume that the input is chosen uniformly from an alphabet $\Sigma$ with $|\Sigma| \geq 4$.
We will first establish a basic result about the probability of finding the first
duplicate after at least $x$ samples.
Taking the random variable $X$ to be as in Lemma~\ref{lemma:birthday},
we show the following fact.

\begin{lemma}
$ \Pr(X \geq x) \leq e^{-\frac{(x-1)^2}{2|\Sigma|}}.$ \label{lemma:birthdayapp}
\end{lemma}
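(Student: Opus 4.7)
The plan is to reduce the tail bound on $X$ to the standard birthday-problem product, apply the elementary inequality $1-t \le e^{-t}$ termwise, and bound the resulting arithmetic series in the exponent. No probabilistic machinery beyond the chain rule is needed.

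First I would express $\{X \ge x\}$ as the event that no duplicate occurs among the first $x$ samples, i.e.\ that $a_1,\dots,a_x$ are all distinct. Drawing the samples one at a time and conditioning on $a_1,\dots,a_{i-1}$ being distinct, the chance that $a_i$ avoids all of them is exactly $1 - (i-1)/|\Sigma|$, so the chain rule yields
$$\Pr(X \ge x) \;=\; \prod_{i=1}^{x}\!\left(1 - \tfrac{i-1}{|\Sigma|}\right).$$
Applying $1 - t \le e^{-t}$ to each factor and summing the resulting arithmetic progression in the exponent gives
$$\Pr(X \ge x) \;\le\; \exp\!\left(-\sum_{i=1}^{x}\tfrac{i-1}{|\Sigma|}\right) \;=\; \exp\!\left(-\tfrac{x(x-1)}{2|\Sigma|}\right),$$
and since $x(x-1) \ge (x-1)^2$ for every $x \ge 1$ (the slack is $x-1\ge 0$), we conclude $\Pr(X \ge x) \le e^{-(x-1)^2/(2|\Sigma|)}$, which is the claimed bound.

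There is no substantive obstacle here --- the proof is a direct two-inequality calculation. The only point that requires care is matching the off-by-one convention in the definition of $X$ so that the product in the chain rule has exactly $x$ factors; with that, the series sums to $x(x-1)/2$ and the harmless slack $x(x-1)-(x-1)^2$ produces the cleanly-stated exponent $(x-1)^2/(2|\Sigma|)$. The hypothesis $|\Sigma|\ge 4$ is not actually used in Lemma~\ref{lemma:birthdayapp} itself; it is carried from the statement of the main Lemma~\ref{lemma:birthday} into which this tail bound will be plugged to derive \eqref{birthday:0} (via $x=n/2$) and \eqref{birthday:2} (via $\mathbb{E}(X^2)=\sum_{x\ge 1}(2x-1)\Pr(X\ge x)$ and a geometric-series-style estimate of the resulting sum of Gaussian tails).
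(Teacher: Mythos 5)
Your proof is correct and follows essentially the same route as the paper's: write $\Pr(X\ge x)$ as the birthday product, apply $1-t\le e^{-t}$ factorwise, and sum the arithmetic series to get $x(x-1)/(2|\Sigma|)$ in the exponent. If anything, your final step ($x(x-1)\ge(x-1)^2$) lands exactly on the stated bound, whereas the paper's displayed chain ends at $e^{-x^2/(4|\Sigma|)}$ and leaves the reconciliation with the lemma statement implicit.
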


\begin{proof}
The proof relies on the fact that $1-x \leq e^{-x}$.
\begin{align*}
\Pr(X \geq x) &=    \prod_{i=1}^{x-1}\left(1-\frac{i}{|\Sigma|}\right)\\
        &\leq   \prod_{i=1}^{x-1} e^{-\frac{i}{|\Sigma|}}\\
        &\leq e^{-\frac{x^2}{4|\Sigma|}}
\end{align*}
\end{proof}

Inequality~\eqref{birthday:0} now follows by substituting $x= n/2$ and
$|\Sigma| = n$ into Lemma~\ref{lemma:birthdayapp} giving
\[
 \Pr\left(X \geq \frac{n}{2}\right) \leq e^{-\frac{n}{16}}.
\]
To prove inequality~\eqref{birthday:2}, recall that for non-negative valued
discrete random variables
\[
\mathbb{E}(X) = \sum_{x=1}^{\infty} \Pr(X \geq x).
\]
Observe that
\begin{align*}
 \mathbb{E}(X^2) &= \sum_{x=1}^{\infty} \Pr(X^2 \geq x)\\
    &= \sum_{x=1}^{\infty} \Pr(X \geq \sqrt{x})\\
    &\leq \sum_{x=1}^{\infty} e^{-\frac{(\sqrt{x})^2}{4|\Sigma|}}\\
    &\leq \int_{x=0}^{\infty} e^{-\frac{(\sqrt{x})^2}{4|\Sigma|}}\\
    &= 4n.
\end{align*}

\ignore{
\newpage
\section{Randomized Algorithms for $ED$ and $ED^{\boxplus n}$: TO BE REMOVED WHEN WE SUBMIT}

We give two conjectures each of which imply
new efficient randomized algorithms for the sliding-window
element distinctness problem.
The algorithm requires time and space of $T^2 \cdot S\in O(n^3 \log^{11/2} n)$.
We then use our previous reduction to produce an element
distinctness algorithm gives the correct answer at all positions with high
probability.  
The complexity we derive for this new algorithm therefore also beats our 
lower bound for sliding-window $F_0\bmod 2$ computation.

We present our randomized solution for $ED^{\boxplus n}$ in three stages.
First we present an efficient solution for the element distinctness problem
in a single window of length $n$ that takes $O(n^{3/2}\log{n})$ time and
$O(\log{n})$ bits of space.
We call this algorithm $A_{ED}$.
The algorithm $A_{ED}$ incorrectly reports that all elements are distinct with
constant probability (assuming either our conjecture or truly random hash functions) but never falsely reports that there is a duplicate.
In Section~\ref{sec:largerspace} we show to extend the approach efficiently to use more space and give the final single window time-space tradeoff result.
We then use the result of Section~\ref{sec:slidingspacesaving} 
to derive our $T^2\cdot S \in \tilde O(n^3)$ randomized algorithm for
$ED^{\boxplus n}$.

\subsection{A cycle-finding algorithm for $ED_n$}

Our approach for solving the element distinctness problem in one window uses
as its basis a modified version of Floyd's ``tortoise and hare'' cycle finding
algorithm~\cite{knu2}.
Floyd's algorithm jumps from index to index of an input $x$ by reading the
value at index $i$ and then jumping to position $x_i$.
In our case, the range of values in our input may be larger than $n$ so we
will use a hash function $h$ to map the values down to the range $[n]$.
In this way we jump from index $i$ to index $h(x_i)$ and then $h(x_{h(x_i)})$
and so on.

The process of hashing the values in the input carries with it the risk of
causing fake duplicates to be found.
That is two indices $i$ and $j$ such that $i \ne j$ and $x_i \ne x_j$ but
$h(x_i) = h(x_j)$.
We will need to distinguish between these fake duplicates and real duplicates
in the array as the cycle finding procedure can potentially find either type
of duplicate.
Fortunately, we can immediately tell whether a duplicate is fake or real by
simply checking the relevant positions in the array.

We call the modified version of Floyd's cycle finding algorithm
{\sc Floyd*}.
The overall steps of our element distinctness algorithm are as
follows.

\begin{enumerate}
\item Pick a random hash function $h$ from a  family of
hash functions $H$. \label{step:pickhash}
\item Run {\sc Floyd*} starting from a random index $j$ in the array until a
collision is found or $\sqrt{n}$ steps have occurred.
Check if any duplicate is real or fake. \label{step:pr}
\item Repeat steps~\ref{step:pickhash} and~\ref{step:pr} at most
$\Theta(n\log{n})$ times or until a real duplicate is found.
\end{enumerate}

Step~\ref{step:pr} computes $h(x_i)$ and then $h(x_{h(x_i)})$ and so on to
jump in the array from index to index.
If there is a duplicate in the path it follows then it finds it at some point.

The algorithm is certainly correct if it finds a real duplicate.
We need to analyze the probability it is correct if it does not.
We assume from here on that there is exactly one real duplicate and ask what
is the probability that the process terminates without having found it?
If there is in fact more than one duplicate the algorithm can only be more
likely to find it.

\subsubsection*{Conjecture B.1 and how it implies an efficient element
distinctness algorithm}

Given a random variable $z$ defined on $[n]^n$, we define the truncation $z^*$
of $z$ to be the random variable in which the suffix following the first
duplicate in $z$ has been removed.

For a family $H$ of hash functions $h:X\rightarrow [n]$ and any $x\in X^n$,
we define the following random variable $y=y(H,x)$ on $[n]^n$.
Choose $y_1$ uniformly from $[n]$ and for $h$ chosen at random from $H$ 
and every $i\in [n-1]$ define $y_{i+1}=h(x_{y_i})$.

\begin{conjecture}\label{conjectureB1}
There is a constant $k$ such that the following is true:
Let $x\in X^n$ 
and $H$ be a $k$-wise
independent family of hash functions $h:X\rightarrow [n]$.
If $x$ consists of distinct elements, and $y=y(H,x)$ then there is a
4-wise independent random variable $z$ on $[n]^n$ such that $y^*$ and $z^*$
have the same distribution.
\end{conjecture}

Observe that the algorithm {\sc Floyd*} produces a sequence that consists
of the first $\sqrt{n}+1$ elements of $y^*(H,x)$.

\begin{lemma}
If $z$ is a pairwise independent random variable on $[n]^n$, then
the probability that $|z^*|\le \sqrt{n}$ is at most 1/2.
\end{lemma}

\begin{proof}
Since $z$ is pairwise independent, for any $i$ and $j$, the probability that
$z_i=z_j$ is precisely $1/n$.  There are $\binom{m}{2}$ such pairs among the 
indices in $[m]$, so, by a union bound, the total probability that some
pair among the first $m$ are equal is at most $\frac{m(m-1)}{2n}$.  Setting
$m=\lfloor\sqrt{n}\rfloor$ yields the claimed result.
\end{proof}

\begin{lemma}
Let $a\ne b\in [n]$.
Let $z$ be a 4-wise independent random variable on $[n]^n$ and let
$S=\{z_j\ :\ j<\min(|z^*|,\sqrt{n})\}$.
Then the probability that both $a\in S$ and $b\in S$ is at
least $|S|^2/(2n^2)$.
\end{lemma}

\begin{proof}
Let $m=|S|$.
For each pair of distinct indices $i$ and $j$ in $[m]$, the probability that
$a$ and $b$ show up in $z_i$ and $z_j$ is precisely $2/n^2$ by pairwise
independence.  Let $E_{i,j}$ be the event that this happens.
If $(i,j)$ and $(i',j')$ are disjoint then by $4$-wise independence, the
probability that $E_{i,j}$ and $E_{i',j'}$ both happen is precisely
$4/n^4$.
If $(i,j)$ and $(i',j')$ overlap in $i$, the probability that $E_{i,j}$ and
$E_{i,j'}$ both happen is $2/n^3$.
Similarly if they overlap only in $j$.
Assuming $n\geq 4$ and using the principle of inclusion-exclusion and we get
the following lower bound on the probabilty of the union of the events
$E_{i,j}$.
\begin{align*}
&\binom{m}{2} \frac{2}{n^2} - \binom{m}{2}\binom{m-2}{2} \frac{4}{n^4} -
\binom{m}{2} \frac{4(m -2)}{n^3} \\
&=\binom{m}{2} \left(\frac{2}{n^2} - \binom{m-2}{2}
\frac{4}{n^4} -  \frac{4(m -2) }{n^3} \right)\\
&\geq \frac{m^2}{2n^2}.
\end{align*}
\end{proof}

\begin{corollary}
Let $a\ne b\in [n]$.
Let $z$ be a 4-wise independent random variable on $[n]^n$ then with
probability at least $1/(4n)$, both $a$ and $b$ appear in $z_1,\ldots, z_m$
where $m=\min(|z*|,\sqrt{n})$.
\end{corollary}

It now remains to show that the conjecture is enough using these two lemmas.

\subsubsection*{Conjecture B.2 and how it implies an efficient element
distinctness algorithm}

In order for our element distinctness algorithm to take small space, we will
need to be able to represent the hash function $h$ that we choose succinctly. 
It is well known that hash functions with constant independence can be
selected and evaluated in constant time and represented in $O(\log{n})$ bits. 
Floyd's cycle-finding algorithm also only uses a constant number of words of
space.
However, we also require further properties from any hash function we choose
which we summarise in the following conjecture.

\begin{conjecture}\label{conjecture}
Define a family of discrete random variables $Y_0\dots Y_{n}$.  Let $Y_0$ be the value at a uniformly randomly chosen position of the input array and let $Y_{i+1} \myeq h(Y_i)$. 
There exists a family of hash functions $H$ such that if $h$ is chosen uniformly from $H$ then the following three properties hold.
\begin{enumerate}
\item The probability that the first duplicate in the sequences of $Y_i$'s occurs before $i=\lfloor \sqrt{n} \rfloor$ is bounded above by $1/2$. \label{conj:firstdupe}
\item Consider two distinct values $a,b \in [n]$ and the set of distinct values
$S_{h \in H} = \{h(x_{i_j}) | j \in [\sqrt{n}], h(x_{i_j}) \notin \{h(x_{i_1})\dots h(x_{i_{j-1}} )\}\}$.
$\Pr(a \in S \land b \in S) \in \Omega(1/n)$. \label{conj:twodistinct}
\item The hash function $h$ can be chosen in constant time, takes constant time to apply to any value in $[n]$ and can be represented in $O(\log{n})$ bits of space.\label{conj:fasthash}
\end{enumerate}
\end{conjecture}

I THINK THAT THE ABOVE CONJECTURE ISN'T QUITE PROPERLY STATED.  I AM CONFUSED ABOUT THE ROLE
of $x$ VERSUS $Y_i$.

We assume from here on that such a hash function family exists. 


THIS SECTION NEEDS TO BE REDONE.


We will now define the following random variables in order to bound the
probability of failing to find a duplicate.
Let $A$ be the event that one call to step~\ref{step:pr} finds a real
duplicate.
Let $B$ be the event that step~\ref{step:pr} finds a fake duplicate.
Let $C$ be the event that the step~\ref{step:pr} times out.
Note that events $A$, $B$ and $C$ are mutually exclusive.
To see this notice that once $A$ or $B$ occurs, no new positions in the
array are ever visited.


Let $X$ be a random variable that represents the number of steps of
{\sc Floyd*} before a fake duplicate is encountered assuming no real
duplicate is found and there is no time out.
We can now give an upper bound on the probability that a fake duplicate is
found in the first $\sqrt{n}$ steps of {\sc Floyd*}.

\[
\Pr(B) \leq \Pr(B|\lnot A) =\Pr(X  \leq \sqrt{n}) \leq 1/2 \hfill \text{(Property~\ref{conj:firstdupe} of Conjecture~\ref{conjecture})} 
\]

The probability that $B$ occurs is therefore bounded above by $1/2$.
The probability of finding a real duplicate (event $A$) is therefore given by
$$P(A) = P(A \land \lnot B) = P(A | \lnot B) P(\lnot B).$$
To get the final probabilistic bound we require the following lemma.

%
%

\begin{theorem}
The algorithm $A_{ED}$ runs in $O(n^{3/2} \log{n})$ time and takes
$O(\log{n})$ bits of space.
$A_{ED}$ never reports a false negative and for any fixed $c \geq 1$ the
algorithm  gives a false positive with probability at most $1/n^c$.
\end{theorem}

\begin{proof}
The running time and space requirements follow directly from the algorithm
description.
We have from Property~\ref{conj:twodistinct} of Conjecture~\ref{conjecture} and the bound on $P(\lnot B)$ that $P(A) \geq 1/4n$.
If we repeat steps~\ref{step:pickhash} and~\ref{step:pr} a total of $n$ times,
the prob of failing to find the real duplicate is
$$P(\text{Real duplicate not found}) \leq \left(1-\frac{1}{4n} \right)^n  \leq e^{-\frac{1}{4}}< 0.8.$$
Therefore we find a real duplicate with probability at least $0.2$.
We then repeat all the steps $\Theta(\log{n})$ times to get the final bound of
$1/n^c$ of getting a false negative.
\end{proof}

\subsection{A variant of the cycle-finding algorithm for larger space}\label{sec:largerspace}

The algorithm $A_{ED}$ described above uses only $O(\log n)$ space.   We
now discuss a generalization of its ideas to larger space bounds.
Given a space bound $S$, let $S'=S/\log n$.
The general idea of this algorithm follows that of $A_{ED}$ except that for
each hash function $h$ it sequentially runs a variant of algorithm {\sc Floyd*}
starting at $S'$ randomly chosen values $j\in [n]$ as follows and
halts if more than $\sqrt{S'n}$ hashes have been evaluated.

The reason that {\sc Floyd*} is so useful is
that in $O(\sqrt{n})$ time it determines {\em all} collisions among roughly
$\sqrt{n}$ hash function
evaluations that are each roughly equally likely (and 4-wise independent).
(The stopping condition ensures that there is at most 1 such collision.)
Our space $O(S)$ algorithm will do the same for $\sqrt{S' n}$ function
evaluations in $O(\sqrt{S' n}\log n)$ time
and will ensure that there are at most $S'$ such collisions.

Our element distinctness algorithm is based on the algorithm {\sc Collide}$_k$
which is a natural extension
of Floyd's algorithm to larger space when the goal is to find any collision
rather than to detect cycles.
Note that there is considerable research on cycle-detection algorithms that
use larger space than Floyd's algorithm (see for example the Nivasch~\cite{niscasch,ssy,brent)} and improve the constant factors in the number of edges
(function evaluations) that must be traversed to find the cycle.

Given a directed graph $G=(V,E)$ of out-degree 1 and a vertex $v\in V$ define
$E(v)$ to be the out-neighbor of $v$ and $E^*(v)$ to be the set of vertices
reachable from $v\in V$.
We say that vertices $u$ and $u'$ are {\em colliding} in $G$ iff
$E(u)=E(u')$ and the triple $(u,u',E(u))$ is called a {\em collision}.

\begin{lemma}
There is an $O(k\log n)$ space algorithm {\sc Collide}$_k$ that, given an
$n$ vertex directed graph $G=(V,E)$ of out-degree 1,
and $k$ starting nodes $v_1,\ldots, v_k\in V$
finds all colliding vertices in $\bigcup_{i\in [k]} E^*(v_i)$
and runs in time
$O(|\bigcup_{i\in [k]} E^*(v_i)|\log^2 k)$.
\end{lemma}

\begin{proof}
We first describe the algorithm {\sc Collide}$_k$:
In addition to the original graph and the forks that it finds, this
algorithm will maintain a {\em redirection list} $R\subset V$ of size $O(k)$.
For each vertex in $R$ we will store the name of the new vertex to which it is
directed.   We maintain a separate list $L$ of all vertices from which an edge
of $G$ has been redirected away and the original vertices that point to them.

\noindent
{\sc Collide}$_k$:\\
Set $R=\emptyset$.\\
For $j=1,\ldots,k$ do:
\begin{enumerate}
\item Execute Floyd's algorithm starting with vertex $v_j$ on the graph $G$
using the redirected out-edges for nodes the redirection list $R$.
\item If the cycle found does not include $v_j$, there must be a collision.
\begin{enumerate}
\item If this collision is in the graph $G$ report the collision $(u,u',v)$
found where $u'$ is the predecessor of $v$ on the cycle and $u$ is the
predecessor of $v$ on the path from $v_j$ to $v$.
If $v$ is a vertex from which an edge has been redirected away and $u*$ was its
predecessor in $G$, then record $(u,u*,v)$.
\item Add $u$ to the redirection list $R$ and redirect it to vertex $v_j$.
\end{enumerate}
\item Traverse the cycle again to find its length and choose two
vertices $w$ and $w'$ on the cycle that are within 1 of half this length apart.
Add $w$ and $w'$ to the redirection list, redirecting $w$ to $E(w')$ and
$w'$ to $E(w)$.   This will halve the length of the cycle.
\end{enumerate}
Observe that in each iteration of the loop there is at most one vertex $v$
where collisions can occur and at
most 3 vertices added to the redirection list.  Moreover, after each iteration
the set of vertices reachable from vertices $v_1,\ldots, v_j$ appear in
disjoint cycles of the redirected graph.
Each iteration of the loop traverses as most one cycle and
every cycle is halved each time it is traversed.

In order to store the redirection list $R$, we use a dynamic dictionary
data structure of $O(k\log n)$ bits that supports insert and search in
$O(\log k)$ time per access or insertion.
We can achieve this using balanced binary search trees but we can improve
the bound to $O(\sqrt{\frac{\log k}{\log\log k}})$ using exponential
trees~\cite{at:exponentialtrees}.
Before following an edge in $G$, the algorithm will first check list $R$ to
see if it has been redirected.  Hence each edge traversed costs $O(\log k)$
time.
Since time is measured relative to the size of the reachable set of vertices,
the only extra cost is that of re-traversing previously discovered edges.
Since all vertices are maintained in cycles and each traversal of a cycle
halves its length, the number of edges traversed for a given cycle that is
involved in at most $k$ collisions will be at most $O(\log k)$ times that
cycle's original length.  The time bound follows.
\end{proof}

\subsection*{A randomized $T^2\cdot S\in \tilde O(n^3)$ element distinctness algorithm}

NEEDS WORK.
This algorithm will use {\sc Collide}$_k$ for $k=S/\log n$ in the
same way that $A_{ED}$ uses the basic Floyd cycle detection algorithm.

More precisely, at each iteration we will run it on the graph given by a
randomly chosen hash functions $h$ applied to the input $(x_1,\ldots, x_n)$ and
with randomly chosen start nodes $v_1,\ldots, v_k$.
In this case we will stop iteration $j$ when the number of new
vertices visited in total is more than $t_j \sqrt{n}$ where $t_0=t_1=1$ and
$t_{j+1}=t_j+1/t_j$ for $j\ge 1$.  Observe that $t_k$ is $\Theta(\sqrt{k})$.
By a similar argument to that for a single cycle one can show that, for a truly
random hash function, the number of new vertices encountered starting
at $v_{j+1}$ before an out-edge lands in $\bigcup_{i\le j} E^*(v_{i})$
is almost surely $\Theta(\sqrt{n}/t_j)$.

Since the vertices encountered are uniformly chosen, for any fixed
duplicate $(i,j)$ in $(x_1,\ldots,x_n)$, the probability that both vertices
$i$ and $j$ are in the evaluation set is proportional to $(\sqrt{kn}/n)^2=k/n$.
If we repeat this experiment $O((n\log n)/k)$ times the probability that we
fail to find a collision is $o(1/n)$.  The total time $T$ required is
$O((n\log n)/k\cdot \sqrt{kn}\log^2 k)$ which is
$O((n^{3/2}\log n\log^2 k)/\sqrt{k})$.   Since $k=S/\log n$, this
yields $T^2 \cdot S\in O(n^3 \log^{11/2} n)$.

In particular, we can apply Theorem~\ref{windowED} to
our single window algorithm $A_{ED}$ to solve the full $ED^{\boxplus n}$
problem and immediately obtain the following theorem.

\begin{theorem}
\label{slidingED}
There is a randomized algorithm for $ED^{\boxplus n}$ which runs in time
$T=O(n^{3/2}\log{n})$ and space bound $S=O(\log^2 {n})$ bits which never
reports a false negative and gives false positives with probability
$o(1/n)$.
\end{theorem}

Observe that the upper bound of Theorem~\ref{slidingED} together with the lower
bound of Theorem~\ref{main_theorem} implies that $ED^{\boxplus n}$ is strictly
easier to compute than $(F_0\bmod 2)^{\boxplus n}$ for a wide range of space
bounds.
}

\end{document}